\newtheorem{thm}{Theorem}
\crefname{section}{\S}{\SS}
\crefname{appendix}{Appendix }{\SS}
\newcommand{\tr}[1]{\mathrm{tr}\left\{ #1\right\}}
\newcommand{\dd}{\mathrm{d}}
\begin{document}

\author{Ruvi Lecamwasam\,\orcidlink{0000-0001-6531-3233}}
\email{me@ruvi.blog}
\affiliation{A*STAR Quantum Innovation Centre (Q.Inc), Institute of Materials Research and Engineering (IMRE), Agency for Science, Technology and Research (A*STAR), 2 Fusionopolis Way, 08-03 Innovis 138634, Singapore \looseness=-1}
\affiliation{Quantum Machines Unit, Okinawa Institute of Science and Technology Graduate University, Onna, Okinawa 904-0495, Japan \looseness=-1}
\affiliation{Centre for Quantum Computation and Communication Technology, Department of Quantum Science and Technology, Australian National University, ACT 2601, Australia \looseness=-1}
\author{Syed Assad\,\orcidlink{0000-0002-5416-7098}}
\affiliation{A*STAR Quantum Innovation Centre (Q.Inc), Institute of Materials Research and Engineering (IMRE), Agency for Science, Technology and Research (A*STAR), 2 Fusionopolis Way, 08-03 Innovis 138634, Singapore \looseness=-1}
\affiliation{Centre for Quantum Computation and Communication Technology, Department of Quantum Science and Technology, Australian National University, ACT 2601, Australia \looseness=-1}
\author{Joseph J. Hope\,\orcidlink{0000-0002-5260-1380}}
\affiliation{Department of Quantum Science and Technology, Australian National University, ACT 2601, Australia \looseness=-1}
\author{\\Ping Koy Lam\,\orcidlink{0000-0002-4421-601X}}
\affiliation{A*STAR Quantum Innovation Centre (Q.Inc), Institute of Materials Research and Engineering (IMRE), Agency for Science, Technology and Research (A*STAR), 2 Fusionopolis Way, 08-03 Innovis 138634, Singapore \looseness=-1}
\affiliation{Centre for Quantum Technologies, National University of Singapore, 3 Science Drive 2, Singapore 117543 \looseness=-2}
\affiliation{Centre for Quantum Computation and Communication Technology, Department of Quantum Science and Technology, Australian National University, ACT 2601, Australia \looseness=-1}
\author{Jayne Thompson\,\orcidlink{0000-0002-3746-244X}}
\email{thompson.jayne2@gmail.com}
\affiliation{Horizon Quantum Computing, 05-22 Alice@Mediapolis, 29 Media Circle, Singapore 138565 \looseness=-1}
\affiliation{Institute of High Performance Computing, Agency for Science, Technology, and Research (A*STAR), Singapore 138634, Singapore \looseness=-2}
\author{Mile Gu\,\orcidlink{0000-0002-5459-4313}}
\email{mgu@quantumcomplexity.org}
\affiliation{Nanyang Quantum Hub, School of Physical and Mathematical Sciences, Nanyang Technological University,  21 Nanyang Link, Singapore 639673 \looseness=-2}
\affiliation{Centre for Quantum Technologies, National University of Singapore, 3 Science Drive 2, Singapore 117543 \looseness=-2}
\affiliation{MajuLab, CNRS-UNS-NUS-NTU International Joint Research Unit, UMI 3654, 117543, Singapore\looseness=-2}

\date{\today}

\title{The relative entropy of coherence quantifies performance in Bayesian metrology}

\begin{abstract}
  The ability of quantum states to be in superposition is one of the key features that sets them apart from the classical world. This `coherence' is rigorously quantified by resource theories, which aim to understand how such properties may be exploited in quantum technologies. 
  There has been much research on what the resource theory of coherence can reveal about quantum metrology, almost all of which has been from the viewpoint of Fisher information. We prove however that the relative entropy of coherence, and its recent generalisation to POVMs, naturally quantify the performance of Bayesian metrology. 
  In particular, we show how a coherence measure can be applied to an ensemble of states.
  We then prove that during parameter estimation, the ensemble relative entropy of coherence (C) is equal to the difference between optimal Holevo information (X), and the mutual information attained by the measurement (I). We call this relation the CXI equality. 
  The ensemble coherence lets us visualise how much information is locked away in superposition inaccessible with a given measurement scheme, and quantify the advantage that would be gained by using a joint measurement on multiple states.
  Our results hold regardless of how the parameter is encoded in the state, encompassing unitary, dissipative, and discrete settings. We consider both projective measurements, and general POVMs.
  This work suggests new directions for research in coherence, provides a novel operational interpretation for the relative entropy of coherence and its POVM generalisation, and introduces a new tool to study the role of quantum features in metrology.
\end{abstract}

\maketitle

\section{Introduction}\label{sec:Introduction}
Superposition is one of the most fundamental and unique aspects of quantum physics. This is rigorously quantified by the resource theory of coherence \cite{baumgratz_quantifying_2014,streltsov_colloquium_2017,chitambar_quantum_2019}, which studies how the amount of superposition over a given basis relates to physical properties of a state and its ability to perform useful tasks. In recent years there has been intensive investigation into the role of coherence in areas including thermodynamics \cite{lostaglio_quantum_2015,korzekwa_extraction_2016,santos_role_2019}, quantum information processing \cite{hillery_coherence_2016,shi_coherence_2017,contreras-tejada_resource_2019,ahnefeld_coherence_2022}, and quantum correlations such as entanglement \cite{streltsov_measuring_2015,ma_converting_2016,jin_quantifying_2018,tan_quantifying_2017}. 

Classical systems can only exist as a statistical mixture of orthogonal states --- such mixtures are termed \emph{incoherent}. In resource theories, we quantify the \emph{coherence} of a quantum system by measuring its distance from the set of incoherent states. There are many different notions of distance, leading to a variety of coherence measures \cite{streltsov_colloquium_2017}. The oldest, and one of the simplest, is the relative entropy of coherence, defined using the quantum relative entropy as the distance measure \cite{baumgratz_quantifying_2014}. It is an outstanding question which coherence measures are more fundamental, or relevant to different problems \cite{adesso_measures_2016,streltsov_colloquium_2017}. Answering this question requires finding applications and operational interpretations for these coherence measures.

One such application of great interest to quantum technologies is metrology \cite{degen_quantum_2017,polino_photonic_2020,demkowicz-dobrzanski_quantum_2015}. It has long been known that quantum mechanics can be used to measure physical parameters more efficiently than is classically possible \cite{giovannetti_quantum_2006,giovannetti_advances_2011}. Precisely which quantum properties allow for this, and how they may best be exploited, is an area of active research \cite{giovannetti_quantum_2006,degen_quantum_2017}. Depending on how a quantum system is measured, some information is usually `locked away', inaccessible to the measurement scheme \cite{salmon_only_2023}. It is intuitive that this inaccessible information has something to do with superposition. For example, projective measurement in some basis is insensitive to phase information between the basis coefficients. It is thus natural to ask what insight the resource theory of coherence can provide in metrology \cite{marvian_how_2016,tan_coherence_2018,baek_quantifying_2020}. 

Broadly speaking there are two lenses through which parameter estimation may be viewed. The widely used \emph{Fisher information} quantifies the smallest fluctuation in the parameter which may be detected around some operating point \cite{demkowicz-dobrzanski_quantum_2015}. Often however we do not know the parameter accurately, but instead have a probability distribution over a large range of values. Moreover the unrealistic assumption of an operating point known with infinite precision can sometimes imply performance which is not attainable in practice \cite{hall_universality_2012,demkowicz-dobrzanski_multi-parameter_2020}. Fisher information also requires continuous evolution of the system with respect to the parameter, which excludes cases such as state discrimination. These issues are addressed by the Bayesian approach \cite{demkowicz-dobrzanski_quantum_2015,demkowicz-dobrzanski_multi-parameter_2020,fuchs_priors_2009}, which instead quantifies the average error in our estimate of the parameter. Fisher information can generally be computed using derivatives of the quantum state, which naturally capture the change due to an infinitesimal fluctuation in the parameter. Analysing Bayesian metrology however is generally more complicated \cite{wiebe_efficient_2016,macieszczak_bayesian_2014,paesani_experimental_2017,ruster_entanglement-based_2017,berry_how_2009}, as we must consider all possible values of the parameter, and all possible measurement results in each case. Quantum information theory is increasingly proving itself to be a useful tool for this \cite{hall_does_2012,nair_fundamental_2018,hall_better_2022}.

The majority of research into coherence and metrology has focused on Fisher information. Detailed studies have shown that the quantum Fisher information is related to the second derivative of the relative entropy of coherence, though there are additional factors in the relation that are still not well understood \cite{giorda_coherence_2017}. Other authors have also found complex relationships between the two \cite{yu_effects_2020,wang_coherence_2018}. The relationship between quantum Fisher information and coherence in general dynamics has been studied in \cite{ye_quantum_2018,jafari_dynamics_2020,mohamed_two-qubit_2021}, and there have even been formulations which define coherence using Fisher information \cite{feng_quantifying_2017,li_quantum_2021,tan_fisher_2021}.

In contrast to the wealth of research investigating the connection between coherence and Fisher information, the role of coherence in Bayesian metrology has received relatively little attention. In \cite{ares_signal_2021,alvarez-marcos_phase-space_2022} the authors studied unitary or displacement encoding of the parameter, and a particular detection scheme analogous to measuring the phase evolution along a `ruler'. They defined a notion of coherence analogous to the classical Wiener-Khintchine theorem, and related this to the measurement resolution. Another example is \cite{hall_does_2012}, where the relative entropy of coherence (termed `$G$-assymetry' by the authors) was used to study the limits of nonlinear metrology with unitary parameter encodings.

In this work we show that the relative entropy of coherence \cite{baumgratz_quantifying_2014} and its recent generalisation to POVMs \cite{bischof_resource_2019,bischof_quantifying_2021} are fundamentally related to Bayesian metrology. We assume nothing about how the parameter is encoded in the state, even allowing for situations where the encoding is not continuous. Our results apply to both projective measurement and general POVMs.
In this general case, we show that the relative entropy of coherence naturally emerges as the quantifier of information gain from the measurement.

The manuscript is structured as follows. We begin in \cref{sec:Background} with an overview of some necessary concepts from Bayesian estimation and information theory. In \cref{sec:CXI} we show how the relative entropy of coherence may be applied to an ensemble of states. We then prove that the ensemble coherence (C) is equal to the difference between the optimal Holevo information assuming infinite resources (X), and the mutual information attained by a particular projective measurement (I), a relation we call the \emph{CXI equality}. In \cref{sec:CXIPOVM} we generalise the CXI equality to POVM measurements. We then apply this in \cref{sec:Measurement} to a simple example of state discrimination on the Bloch sphere, the code for which is available at \cite{lecamwasam_cxi_2023}. Finally in \cref{sec:Conclusion} we discuss promising directions for future research.



\section{Information theory and Bayesian metrology}\label{sec:Background}
Quantum metrology exploits properties such as interference and entanglement for high-precision sensing. There is a veritable zoo of possible schemes, but the majority follow the general procedure of initialising a probe state $\rho_0$, interacting this with a parameter $\phi$ of interest resulting in a state $\rho_{\phi}$, and then measuring $\rho_{\phi}$ to gain information about the parameter \cite{degen_quantum_2017}. Typically many measurements are performed \cite{denot_adaptive_2006,teklu_bayesian_2009,berry_how_2009}, and the results then processed to make an estimate of the parameter \cite{trees_detection_2013}. Much work has been done to optimise each of these steps \cite{demkowicz-dobrzanski_quantum_2015,berry_how_2009}. Finding fundamental bounds on performance is thus a complex task \cite{holevo_probabilistic_2011,helstrom_quantum_1969,demkowicz-dobrzanski_quantum_2015,degen_quantum_2017}, but in recent years quantum information theory \cite{wilde_quantum_2017} has shown itself to be a powerful tool for addressing this problem \cite{hall_does_2012,hall_better_2022,nair_fundamental_2018}.

Here we will follow standard probability conventions \cite{resnick_probability_2014} and use a capital letter such as $\Phi$ to represent a general probabilistic event, and a lower case letter $\phi$ for a specific outcome. For example, $\Phi$ may be the outcome of a dice roll, which can take values $\phi=1$ through to $\phi=6$. Each outcome has an associated probability $p_{\Phi}(\phi)$. In information theory, the uncertainty in the value of $\Phi$ is represented by the entropy:
\begin{equation}
  H({\Phi})=-\sum_{\phi}p_{\Phi}(\phi)\log p_{\Phi}(\phi).
\end{equation}
Entropy has a geometric interpretation in that when exponentiated it gives the volume of possible values that $\Phi$ may take \cite[\S 3]{cover_elements_2005}. For example for a fair dice $p_\Phi(\phi)=1/6$ so $H(\Phi)=\log 6$, representing $e^{\log 6}=6$ possible outcomes. However, if you learn via measurement that the dice's value was $3$, the entropy becomes $0$ representing a single value $e^0=1$. If the outcome were known to be either $2$ or $5$ with equal probability, the entropy would be $\log 2$ corresponding to two possible outcomes. The case of continuous $\Phi$ is analogous with sums replaced by integrals, and the exponentiated entropy then represents the volume of possible values \cite{cover_elements_2005}.


Suppose $\Phi$ represents an unknown parameter that we wish to estimate, such as the strength of a magnetic field. In Bayesian metrology, the probability distribution $p_{\Phi}$ then describes our initial knowledge about the value of $\Phi$. For example, for physical reasons we may know it to be Gaussian distributed about some $\phi_0$. To apply the Bayesian framework in the case of no prior knowledge, one can take $p_{\Phi}$ to be the probability distribution with maximum entropy that satisfies physical constraints, which is typically the uniform distribution.

We begin our estimation process by interacting a probe state $\rho_0$ with $\Phi$. This results in an ensemble $\mathcal{E}_{\Phi}$ of states $\rho_{\phi}$ with probability $p_{\Phi}(\phi)$. 
The ensemble state is then the average over the possible parameter values:
\begin{equation}\label{eq:EnsembleState}
  \rho_{\Phi}=\sum_{\phi}p_{\Phi}(\phi)\rho_{\phi}.
\end{equation}
We choose a basis and perform a measurement on $\rho_{\Phi}$, whose outcome is a random value $M$. After observing a measurement result, we can use Bayes' theorem to combine this with our prior information to construct the posterior probability distribution conditioned on our measurement $p_{\Phi|M}$. Note that $M$ can represent either a single measurement, or a sequence of measurement results in different bases.
Finally we make an estimate $\hat{\phi}$ of $\phi$ which is some function of the final distribution $p_{\Phi|M}$, such as the mean or median \cite{demkowicz-dobrzanski_quantum_2015,trees_detection_2013}. 

To quantify the performance of Bayesian metrology, we look at the expected value of $(\phi-\hat{\phi})^2$. This is called the Average Mean Squared Error (AMSE). Computing the AMSE can be quite challenging, since we must take into account the possible values of $\phi$, all possible sequences of measurement results, and whatever algorithm is used to extract $\hat{\phi}$ from the final probability distribution. We will see that we can bound this error using information theory.

The entropy of the final distribution $p_{\Phi|M}$ is denoted $H(\Phi|M)$. This will necessarily be less than or equal to $H(\Phi)$ \cite{cover_elements_2005,wilde_quantum_2017}, since measurement can only reduce uncertainty in the parameter. This decrease is quantified by the mutual information
\begin{equation}
  I(\Phi;M)=H(\Phi)-H(\Phi|M),
\end{equation}
where the `$;$' is to emphasise that $I$ is symmetric in its arguments \cite{cover_elements_2005}. The mutual information provides a lower bound on the average mean squared error \cite{hall_does_2012}, and repeatedly maximising mutual information optimises performance for multi-round measurement schemes \cite{davies_information_1978}.

Different measurement schemes will provide more or less information about the parameter. A natural question to ask is, given a particular ensemble $\mathcal{E}_{\Phi}$, what is the maximum mutual information that could be obtained from the optimum choice of measurement. The answer to this is provided by the Holevo information \cite{wilde_quantum_2017}. The entropy of a quantum state $\rho$, termed the von Neumann entropy, is defined as
\begin{equation}
  S(\rho)=-\mathrm{tr}\{\rho\log\rho\}.
\end{equation}
The Holevo information of the ensemble $\mathcal{E}_{\Phi}$ is then
\begin{equation}\label{eq:HolevoInformation}
  \chi(\mathcal{E}_{\Phi})=S\left(\rho_{\Phi}\right)-\sum_{\phi}p_{\Phi}(\phi)S(\rho_{\phi}).
\end{equation}
The Holevo information provides an upper bound on the mutual information: $I(M;\Phi)\le\chi(\mathcal{E})$. It thus quantifies the maximum information that can be extracted per probe state. However in general to attain the Holevo information, we must obtain $N$ identical probes $\rho_{\phi}$, apply an entangling unitary operation, and then perform the optimum multi-partite measurement. In the limit $N\rightarrow\infty$ the information gained approaches $N\chi(\mathcal{E})$, so the Holevo information is the information-per-probe assuming infinite resources and technology.

It is thus natural to ask about the difference between the optimum Holevo information $\chi$, and the mutual information $I$ actually attained from a given measurement scheme. In the following section we will show that this is given by the relative entropy of coherence \cite{baumgratz_quantifying_2014}, and its generalisation to POVMs \cite{bischof_resource_2019}.



\section{Coherence and Bayesian Metrology}\label{sec:CXI}

In this section we will derive the CXI equality for the case of projective measurement, which shows that the relative entropy of coherence quantifies the difference between the Holevo and mutual informations. In our derivations we will take the parameter to be discrete, the continuous case is analogous with sums replaced by integrals. We will denote our parameter as $\Phi$, and let $\mathcal{E}_{\Phi}$ denote the ensemble of states $\left\{\left(\rho_{\phi},p_{\Phi}(\phi)\right)\right\}$. Our measurement result will be $M$, and we will refer to the basis of the projective measurement as the `basis of $M$'. 

We wish to study the difference $\chi(\mathcal{E}_{\Phi})-I(\Phi;M)$. The information lost by a projective measurement is related to the `amount of superposition' of the probe $\rho_{\phi}$ over the basis of $M$. If $\rho_{\phi}$ is a measurement basis state, then the measurement will return the exact state with no information loss. However if the probe is in a superposition of many basis states, then a measurement can only return part of the information in the state. We will need to measure multiple times to recover the weights of the superposition, and choose different bases to gain phase information. Thus it seems intuitive there should be a relationship between the mutual information $I(\Phi;M)$, and the coherence of the ensemble states in the basis of $M$.

In this work we will use the relative entropy of coherence -- referred to from now on as the \emph{coherence} -- the simplest and most widely used measure \cite{streltsov_colloquium_2017}. Let the measurement $M$ be projection onto an orthogonal basis $\{\lvert m\rangle\}$. For a quantum state $\rho$, let $\Delta_M[\rho]$ represent the same state decohered in the basis of $M$:
\begin{equation}
  \Delta_M[\rho]=\sum_m\langle m|\rho |m\rangle\,|m\rangle\langle m|.
\end{equation}
In other words, $\Delta_M[\rho]$ sets off-diagonal elements to zero in the matrix of $\rho$ in the basis of $M$.  Then the coherence of $\rho$ with respect to this basis is defined as
\begin{equation}\label{eq:Coherence}
  C_M(\rho)=S\left(\Delta_M[\rho]\right)-S(\rho),
\end{equation}
where $S(\rho)=-\tr{\rho\log\rho}$ is the von Neumann entropy. 
It can be shown that $C_M(\rho)$ is equal to the quantum relative entropy between $\rho$ and $\Delta_M[\rho]$ \cite[\S III.C.1]{streltsov_colloquium_2017}: 
\begin{equation}
  C_M(\rho)=\mathrm{tr}\left\{\rho\log\rho-\rho\log\Delta_M[\rho]\right\}.
\end{equation}

Let us consider the information that is lost when we projectively measure an ensemble $\mathcal{E}_{\Phi}$. Our first thought may be to look at the average information loss upon measurement: $\sum_{\phi}p_{\Phi}(\phi)C_M(\rho_{\phi})$. However this represents loss of information, both about the parameter, and also the quantum state itself. It is only the former that is of interest in parameter estimation, thus we must subtract the coherence of the ensemble state. This leads us to define the \emph{ensemble coherence} of $\mathcal{E}_{\Phi}$ as
\begin{equation}\label{eq:EnsembleCoherence}
  \begin{aligned}
    C_M(\mathcal{E}_{\Phi})
    &=\sum_{\phi}p_{\Phi}(\phi)C_M(\rho_{\phi})-C_M\left(\sum_{\phi}p_{\Phi}(\phi)\rho_{_\phi}\right), \\
    &= \sum_{\phi}p_{\Phi}(\phi)C_M(\rho_{\phi})-C_M\left(\rho_{\Phi}\right).
  \end{aligned}
\end{equation}
We note that is analogous to the definiton of the Holevo information in \cref{eq:HolevoInformation}, replacing entropy with coherence. Coherence decreases under the mixing of quantum states \cite[\S III.C.1]{streltsov_colloquium_2017}, thus $C_M(\mathcal{E}_{\Phi})$ is always positive. To the best of our knowledge, this notion of the coherence of an ensemble is novel.

Our primary result is to show that the ensemble coherence is equal to the difference between the Holevo and mutual informations:
\begin{thm}[CXI equality for projective measurements]\label{thm:CXI}
  Let $\Phi$ be a parameter with probability distribution $p_{\Phi}(\phi)$, with a corresponding ensemble $\mathcal{E}_{\Phi}$ of states $\rho_{\phi}$. If we perform a single projective measurement on the ensemble, the CXI equality holds:
  \begin{equation}\label{eq:aCXI}
    C_M(\mathcal{E}_{\Phi})=\chi(\mathcal{E}_{\Phi})-I(\Phi;M),
  \end{equation}
  where $C_M(\mathcal{E}_{\Phi})$ is the ensemble coherence as defined in \cref{eq:EnsembleCoherence}, and the coherence measure is the relative entropy of coherence.
\end{thm}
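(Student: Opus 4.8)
The plan is to substitute the entropic form $C_M(\rho)=S(\Delta_M[\rho])-S(\rho)$ into the definition of the ensemble coherence, \cref{eq:EnsembleCoherence}, and regroup the resulting four von Neumann entropy terms. Collecting the two $-S(\rho)$ contributions reconstructs the Holevo information $\chi(\mathcal{E}_{\Phi})$, while the two decohered-entropy contributions form the remaining group, yielding
\begin{equation}
  C_M(\mathcal{E}_{\Phi})=\chi(\mathcal{E}_{\Phi})-\left[S(\Delta_M[\rho_{\Phi}])-\sum_{\phi}p_{\Phi}(\phi)S(\Delta_M[\rho_{\phi}])\right].
\end{equation}
I would then recognise the bracketed quantity as the Holevo information of the \emph{decohered} ensemble $\{(\Delta_M[\rho_{\phi}],p_{\Phi}(\phi))\}$, and the crux of the argument is to show that this decohered Holevo information equals the classical mutual information $I(\Phi;M)$.

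To establish this identity, I would exploit the fact that every decohered state $\Delta_M[\rho_{\phi}]$ is diagonal in the measurement basis $\{|m\rangle\}$, with diagonal entries $\langle m|\rho_{\phi}|m\rangle=p_{M|\Phi}(m|\phi)$ --- precisely the Born-rule probability of obtaining outcome $m$ from $\rho_{\phi}$. For any state diagonal in a fixed basis the von Neumann entropy reduces to the Shannon entropy of its diagonal, so $S(\Delta_M[\rho_{\phi}])=H(M|\Phi=\phi)$. Since dephasing is linear, $\Delta_M[\rho_{\Phi}]=\sum_{\phi}p_{\Phi}(\phi)\Delta_M[\rho_{\phi}]=\sum_m p_M(m)|m\rangle\langle m|$ with $p_M(m)=\sum_{\phi}p_{\Phi}(\phi)p_{M|\Phi}(m|\phi)$, giving $S(\Delta_M[\rho_{\Phi}])=H(M)$. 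Substituting these into the bracket produces $H(M)-H(M|\Phi)=I(\Phi;M)$, completing the chain.

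The main obstacle is the conceptual step of identifying the decohered-entropy group as a classical Holevo information and then collapsing it to the mutual information; once one sees that $\Delta_M$ maps the quantum ensemble to a commuting, effectively classical ensemble whose Holevo bound is saturated by the very measurement $M$ that defines the basis, the remaining manipulations are just the translation between von Neumann and Shannon entropies. I would take care to state explicitly the linearity of the dephasing channel used above, and note that the continuous-parameter case follows by replacing sums with integrals, as in the surrounding text.
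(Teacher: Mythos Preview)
Your proposal is correct and follows essentially the same route as the paper: both arguments hinge on the identity $S(\Delta_M[\rho])=H(M)$ for diagonal states (and its conditional version $S(\Delta_M[\rho_{\phi}])=H(M|\Phi=\phi)$), then regroup the four entropy terms into $\chi(\mathcal{E}_{\Phi})$ and $I(\Phi;M)=H(M)-H(M|\Phi)$. The only cosmetic difference is that you start from $C_M(\mathcal{E}_{\Phi})$ and work toward $\chi-I$, whereas the paper expands $\chi-I$ and collapses it to $C_M(\mathcal{E}_{\Phi})$; your extra remark that the bracketed term is the Holevo information of the decohered ensemble is a nice interpretive touch but not a different method.
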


This does not assume any particular form of the interaction between the probe and parameter which generates the states $\rho_{\phi}$. Thus \cref{eq:aCXI} holds for unitary encodings such as $\rho_{\phi}=e^{-iG\phi}\rho_0e^{iG\phi}$ for some Hermitian generator $G$, as well as dissipative evolution, and discontinuous settings where a separate $\rho_{\phi}$ is specified for each value of $\phi$. It also applies to both continuous and discrete parameters, and for quantum states in finite and infinite dimensions. As we will discuss at the end of the section, the latter requires the additional condition that the prior distribution has finite entropy: $H(\Phi)<\infty$, which is satisfied by all physical distributions \cite{baccetti_infinite_2013}.

Let us now prove \cref{thm:CXI}. We estimate a parameter $\phi$ encoded in a probe state $\rho_{\phi}$ by making a projective measurement $M$. Let $M$ have eigenbasis $\{|m\rangle\}$ with corresponding orthonormal projectors $\Pi_m$. We will abbreviate the probability $p_{M}(j)$ of observing the $j$th result as $p_j$:
\begin{equation}
   p_j=p_M(j).
\end{equation}
Then if $\rho$ is the state being measured, we have $p_m=\tr{\Pi_m\rho}$. The entropy of $M$ is
\begin{equation}
  H(M)=-\sum_mp_m\log p_m.
\end{equation}
We will first show that this is equal to the entropy of $\rho$ decohered in the orthogonal basis of $M$:
\begin{equation}
  H(M)=S(\Delta_M[\rho]).
\end{equation}
To see this we expand the right hand side as 
\begin{equation}
  S(\Delta_M[\rho])=-\tr{\Delta_M[\rho]\log\Delta_M[\rho]}.
\end{equation}
The decohered state $\Delta_M[\rho]$ is a diagonal matrix in the basis of $M$, where the elements of the diagonal are the probabilities $p_m$ of measurement outcome $m$. The entropy is then
\begin{equation}\label{eq:aEntropyOfM}
  \begin{aligned}
    S(\Delta_M[\rho])&=-\tr{
      \left(\begin{array}{ccc}
        p_1 &   &  \\
            & \ddots & \\
            & & p_{N_M}
      \end{array}\right)
      \log\left(\begin{array}{ccc}
        p_1 &   &  \\
            & \ddots & \\
            & & p_{N_M}
      \end{array}\right)
    }, \\
    &=-\tr{
      \left(\begin{array}{ccc}
        p_1\log p_1 &   &  \\
            & \ddots & \\
            & & p_{N_M}\log p_{N_M}
      \end{array}\right)
    }, 
  \end{aligned}
\end{equation}
which is equal to $H(M)$. Note that in \cref{eq:aEntropyOfM} off-diagonal elements of the matrices are zero.

We are now prepared to prove \cref{thm:CXI}. 
\begin{proof}[Proof of CXI equality]
  We begin by expanding the right hand side of \cref{eq:aCXI} in terms of the entropies of quantum states. The Holevo information is defined as
  \begin{equation}
    \chi(\mathcal{E}_{\Phi})=S(\rho_{\Phi})-\sum_{\phi}p_{\Phi}(\phi) S(\rho_\phi),
  \end{equation}
  where $\rho_{\Phi}$ is the ensemble state defined in \cref{eq:EnsembleState}. To calculate the mutual information we will use the expression $I(\Phi;M)=H(M)-H(M|\Phi)$ \cite{wilde_quantum_2017}. We showed earlier that $H(M)=S(\Delta_M[\rho_{\Phi}])$, while for the conditional entropy:
  \begin{equation}
    \begin{aligned}
      H(M|\Phi) &= \sum_{\phi}p_{\Phi}(\phi) H(M|\phi), \\
        &= \sum_{\phi}p_{\Phi}(\phi) S(\Delta_M[\rho_\phi]).
    \end{aligned}
  \end{equation}
  The right hand side of \cref{eq:aCXI} is thus
  \begin{equation}
    \begin{aligned}
      &\left(S(\rho_{\Phi})-\sum_{\phi}p_{\Phi}(\phi) S(\rho_\phi)\right) \\
      &\phantom{=}-\left(S(\Delta_M[\rho_{\Phi}])-\sum_{\phi}p_{\Phi}(\phi) S(\Delta_M[\rho_\phi])\right) \\
      &=\sum_{\phi}p_{\Phi}(\phi)\left(S(\Delta_M[\rho_\phi])-S(\rho_\phi)\right) \\
      &\phantom{=}-\left(S(\Delta_M[\rho_{\Phi}])-S(\rho_{\Phi})\right), \\
      &= C_M(\mathcal{E}_{\Phi}),
    \end{aligned}
  \end{equation}
  where in the last line we recalled the definition of the ensemble coherence \cref{eq:EnsembleCoherence}.
\end{proof}

Let us discuss the validity of our proof in infinite-dimesional Hilbert spaces. In finite dimensions entropy is always finite, thus all of the sums above converge, and we do not have to worry about expressions such as $\infty-\infty$. However, many common quantum systems require an infinite-dimensional Hilbert space, optics being a prominent example. Let us first consider the parameter $\phi$. Continuous parameters, and discrete parameters with an infinite number of values, can have infinite entropy. However such distributions are highly pathological, and unlikely to occur in a natural context \cite{baccetti_infinite_2013}. We will therefore add the requirement that the parameter $\phi$ must have finite entropy.

This requirement is sufficient to ensure that all other quantities are finite. For the mutual information we have \cite{wilde_quantum_2017}
\begin{equation}
  I(\Phi;M)=H(\Phi)-H(\Phi|M)\le H(\Phi),
\end{equation}
while an ensemble $\mathcal{E}_{\Phi}$ encoding a parameter $p_{\Phi}(\phi)$ cannot contain more information than $p_{\Phi}(\phi)$ itself \cite[Eq. (1)]{shirokov_upper_2019}:
\begin{equation}
  \chi(\mathcal{E}_{\Phi})\le H(\Phi).
\end{equation}
Since all other quantities defined in the proof may be expressed in terms of the above, all sums are guaranteed to be finite. Thus the CXI equality also holds for continuous parameters and in infinite dimensional Hilbert spaces.

From the CXI equality, we can see that the optimal projective measurement is the one which minimises the ensemble coherence. Most of the time there does not exist a basis such that $C_M(\mathcal{E}_{\Phi})=0$, meaning that some information will always be `locked away' in the coherences. Accessing this information will require a collective measurement on multiple probes. Precisely how much advantage such a scheme would provide is quantified exactly by the minimum value of the ensemble coherence.

\section{General POVM measurements}\label{sec:CXIPOVM}
Projective measurements do not describe all measurement schemes. A common example is photodetection, which does not project the system into an eigenstate. More generally, measurements can be modelled as entangling our state with an ancilla, and then performing a projective measurement on the combined system. Such measurements are described by the framework of Positive Operator Valued Measures (POVMs) \cite{wiseman_quantum_2009}. In this section we will generalise \cref{thm:CXI} to POVMs, using the recent extension of the relative entropy of coherence \cite{bischof_resource_2019,bischof_quantifying_2021}.

A POVM is described by a set of positive semi-definite operators $\{M_j\}$ such that 
\begin{equation}\label{eq:POVMIdentitySum}
    \sum_jM_j^{\dagger}M_j=I,
\end{equation}
where $I$ is the identity operator. The probability of the $j$th measurement outcome is given by
\begin{equation}\label{eq:POVMProbability}
  p_j=\mathrm{tr}\{\rho M_j^{\dagger}M_j\}.
\end{equation}
Positivity of the measurement operators ensure that the probabilities are positive, and from \cref{eq:POVMIdentitySum} the probabilities sum to unity. If the $j$th measurement outcome is detected, the post-measurement state is
\begin{equation}\label{eq:POVMPostState}
  \rho_j=M_j\rho M_j^{\dagger}/p_j.
\end{equation}
If $\{M_j\}$ is a set of orthogonal projectors, then $M_j^{\dagger}M_j=M_j$ and we recover the usual expressions for a projective measurement. 

We will now describe the Naimark dilation, which lets us represent every POVM as a projective measurement on a larger Hilbert space \cite{watrous_theory_2018}. Let $\mathcal{H}$ be the Hilbert space of our system, and $\mathcal{A}$ be an ancilla space of dimension equal to the number $N$ of measurement operators in the POVM. We define the map $V:\mathcal{H}\rightarrow\mathcal{H}\otimes\mathcal{A}$ as:
\begin{equation}
  V = \sum_jM_j\otimes |j\rangle.
\end{equation}
This lifts our state $\rho\in\mathcal{H}$ to $V\rho V^{\dagger}$ in the larger Hilbert space $\mathcal{H}\otimes\mathcal{A}$. Going forward, we will use tildes to refer to quantities in our expanded Hilbert space:
\begin{equation}
  \tilde{\rho}=V\rho V^{\dagger}.
\end{equation}

\cref{eq:POVMIdentitySum} implies $V$ is an isometry, meaning for all states $\lvert\psi\rangle,\lvert\phi\rangle\in\mathcal{H}$ we have $\langle\psi|\phi\rangle=\langle\tilde{\psi}|\tilde{\phi}\rangle=\left(\langle\psi| V^{\dagger}\right)\left(V|\phi\rangle\right)$. To see this we expand:
\begin{equation}
  \begin{aligned}
    \langle\psi| V^{\dagger}V|\phi\rangle &= \langle\psi|\sum_{jk}M_j^{\dagger}M_k\langle k|j\rangle|\phi\rangle, \\
      &=\langle\psi|\sum_jM_j^{\dagger}M_j|\phi\rangle, \\
      &=\langle\psi|\phi\rangle,
  \end{aligned}
\end{equation}
where in the second-last line we used \cref{eq:POVMIdentitySum}. 
Thus $\tilde{\rho}$ has the same eigenvalues as $\rho$. Since entropy is a function of the eigenvalues of a state, the dilation $\tilde{\rho}$ has the same entropy as the original state $\rho$.

In the dilated Hilbert space the projection operator corresponding to $M_j$ is
\begin{equation}
  \tilde{M}_j=I_{\mathcal{H}}\otimes |j\rangle\langle j|_{\mathcal{A}},
\end{equation}
where $I_{\mathcal{H}}$ is the identity operator on $\mathcal{H}$ and $\lvert j\rangle_{\mathcal{A}}$ is the $j$th basis element of $\mathcal{A}$. We will omit the subscript Hilbert spaces from now on. The probability of the $j$th measurement is given by \cref{eq:POVMProbability}:
\begin{equation}
  \begin{aligned}
    \tilde{p}_j &= \mathrm{tr}\{\tilde{\rho}\tilde{M}_j\}, \\
      &= \mathrm{tr}\left\{(V\rho V^{\dagger})(I\otimes |j\rangle\langle j|)\right\}, \\
      &= \sum_{kl}\mathrm{tr}\left\{M_k\rho M_l^{\dagger}\otimes \left(|k\rangle\langle l|j\rangle\langle j|\right)\right\}, \\
      &=\mathrm{tr}\{\rho M_j^{\dagger}M_j\}, \\
      &= p_j.
  \end{aligned}
\end{equation}
Thus the dilated measurement has the same statistics as the POVM.

Naimark's dilation can be used to generalise the relative entropy of coherence to POVMs \cite{bischof_resource_2019}. Given a state $\rho$, the coherence relative to a POVM $M=\{M_j\}$ is defined as the coherence of $\tilde{\rho}$ relative to the projective operators $\{\tilde{M}_j\}$:
\begin{equation}\label{eq:POVMCoherence}
  \begin{aligned}
    C_M(\rho) &= S\left(\Delta_{\tilde{M}}[\tilde{\rho}]\right)-S(\tilde{\rho}), \\
      &= S\left(\sum_j\tilde{M}_j\tilde{\rho}\tilde{M}_j^{\dagger}\right)-S(\tilde{\rho}).
  \end{aligned}
\end{equation}
It is shown in \cite{bischof_resource_2019} this is well-defined and satisfies the requirements of a coherence measure. When $M$ is a projective measurement this reduces to the relative entropy of coherence, thus we are justified in using the same symbol $C_M$ for both. Going forward, we will refer to this as the \emph{POVM coherence}.

We are now prepared to generalise the CXI equality to positive-operator valued measures:
\begin{thm}[CXI equality for POVMs]\label{thm:CXIPOVM}
  Let $\Phi$ be a parameter with probability distribution $p_{\Phi}(\phi)$, with a corresponding ensemble $\mathcal{E}_{\Phi}$ of states $\rho_{\phi}$. If we perform a single POVM measurement on the ensemble, the CXI equality holds:
  \begin{equation}\label{eq:CXIStatementPOVM}
    C_M(\mathcal{E}_{\Phi})=\chi(\mathcal{E}_{\Phi})-I(\Phi;M),
  \end{equation}
  where $C_M(\mathcal{E}_{\Phi})$ is the ensemble coherence as defined in \cref{eq:EnsembleCoherence}, and the coherence measure is the POVM coherence defined in \cref{eq:POVMCoherence}.
\end{thm}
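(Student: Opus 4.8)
The plan is to reduce the POVM case to the already-proven projective case (Thm. 1) by working entirely in the dilated Hilbert space $\mathcal{H}\otimes\mathcal{A}$. The key observation is that the definitions of all three quantities in the statement — the POVM coherence $C_M(\rho)$, the Holevo information $\chi(\mathcal{E}_\Phi)$, and the mutual information $I(\Phi;M)$ — have natural images under the Naimark dilation $V$, and that these images coincide with the projective-measurement quantities for the dilated ensemble. So the strategy is: dilate the ensemble, invoke Thm. 1 on the dilated side, then translate each term back.

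First I would define the dilated ensemble $\tilde{\mathcal{E}}_\Phi=\{(\tilde\rho_\phi,p_\Phi(\phi))\}$ with $\tilde\rho_\phi=V\rho_\phi V^\dagger$, and observe that dilation commutes with taking the ensemble average: since $V$ is linear, $\widetilde{\rho_\Phi}=\sum_\phi p_\Phi(\phi)\tilde\rho_\phi$ is exactly the dilation of $\rho_\Phi$. Next I would match the three quantities one at a time. For the \emph{Holevo information}, I use the fact established in the excerpt that $V$ is an isometry, so $S(\tilde\rho)=S(\rho)$ for every state; applying this to both $\rho_\Phi$ and each $\rho_\phi$ gives $\chi(\tilde{\mathcal{E}}_\Phi)=\chi(\mathcal{E}_\Phi)$. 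For the \emph{mutual information}, I use that the dilated projective measurement $\{\tilde M_j\}$ reproduces the POVM statistics, $\tilde p_j=p_j$, and likewise $\tilde p_{j|\phi}=p_{j|\phi}$; since $I(\Phi;M)$ depends only on the joint distribution $p_\Phi(\phi)\,p_{j|\phi}$, the mutual information of the dilated projective measurement equals that of the original POVM. For the \emph{POVM coherence}, the definition in Eq. (POVMCoherence) \emph{is} the projective coherence $C_{\tilde M}(\tilde\rho)$ in the dilated space; combining this per-state identity with the fact that dilation commutes with the ensemble average yields $C_M(\mathcal{E}_\Phi)=C_{\tilde M}(\tilde{\mathcal{E}}_\Phi)$ directly from the definition in Eq. (EnsembleCoherence).

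With these three identifications in hand, the conclusion is immediate: Thm. 1 applied to the dilated ensemble $\tilde{\mathcal{E}}_\Phi$ under the projective measurement $\{\tilde M_j\}$ gives
\begin{equation}
  C_{\tilde M}(\tilde{\mathcal{E}}_\Phi)=\chi(\tilde{\mathcal{E}}_\Phi)-I(\Phi;\tilde M),
\end{equation}
and substituting the three equalities above recovers Eq. (CXIStatementPOVM). The only subtlety I expect to require care is making the ensemble-coherence identification airtight: one must check that $C_M(\mathcal{E}_\Phi)$, which is defined as a difference of POVM coherences of the physical states $\rho_\phi$ and $\rho_\Phi$ in $\mathcal{H}$, equals $C_{\tilde M}(\tilde{\mathcal{E}}_\Phi)$, the corresponding difference of projective coherences of the dilated states in $\mathcal{H}\otimes\mathcal{A}$. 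This follows from applying the per-state definition $C_M(\rho)=C_{\tilde M}(V\rho V^\dagger)$ to each term together with linearity of $V$, but it is the step where one is genuinely translating between the two Hilbert spaces rather than manipulating a single expression, so it deserves to be written explicitly. The infinite-dimensional caveat ($H(\Phi)<\infty$) transfers unchanged, since dilation preserves entropies and measurement statistics and hence all the finiteness bounds invoked for Thm. 1.
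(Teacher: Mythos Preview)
Your proposal is correct and follows the paper's proof essentially step for step: dilate the ensemble via the Naimark isometry $V$, invoke \cref{thm:CXI} for the projective measurement $\{\tilde M_j\}$ on $\tilde{\mathcal E}_\Phi$, and then identify each of the three terms with its undilated counterpart using, respectively, entropy-preservation of the isometry (for $\chi$), equality of the measurement statistics $\tilde p_{j|\phi}=p_{j|\phi}$ (for $I$), and the very definition of the POVM coherence together with linearity of $V$ (for $C_M$). The extra points you flag---making the ensemble-coherence identification explicit and noting that the finite-entropy hypothesis $H(\Phi)<\infty$ transfers unchanged---are consistent with the paper's treatment.
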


\begin{proof}[Proof of CXI for POVMs]
  Let $\tilde{\mathcal{E}}_{\Phi}$ be the dilated ensemble consisting of states $\tilde{\rho}_{\phi}$ with probability $p_{\Phi}(\phi)$. Since $\tilde{M}=\{\tilde{M}_j\}$ is a projective measurement of $\tilde{\mathcal{E}}_{\Phi}$, the CXI equality holds for the dilated measurement:
\begin{equation}
  C_{\tilde{M}}(\tilde{\mathcal{E}}_{\Phi})=\chi(\tilde{\mathcal{E}}_{\Phi})-I(\tilde{M},\Phi).
\end{equation}
  The term on the left-hand side is the POVM coherence $C_M(\mathcal{E}_{\Phi})$. Let us thus consider the right-hand side.

  The Holevo information is given by
  \begin{equation}
    \chi(\tilde{\mathcal{E}}_{\Phi})=S\left(\int\tilde{\rho}_{\phi}p_{\Phi}(\phi)\dd\phi\right)-\int S(\tilde{\rho}_{\phi})p_{\Phi}(\phi)\dd\phi.
  \end{equation}
  Since the dilation preserves entropy, for the second term we have $S(\tilde{\rho}_{\phi})=S(\rho_{\phi})$. Thus let us examine the integral in brackets:
  \begin{equation}
    \begin{aligned}
      \int\tilde{\rho}_{\phi}p_{\Phi}(\phi)\dd\phi &= \int\left(V\rho_{\phi}V^{\dagger}\right)p_{\Phi}(\phi)\dd\phi, \\
        &= V\left(\int \rho_{\phi}p_{\Phi}(\phi)\dd\phi\right)V^{\dagger}.
    \end{aligned}
  \end{equation}
  Since the dilation $V$ doesn't change the entropy, we find that the Holevo information of the dilated ensemble is the same as the original ensemble:
  \begin{equation}
    \chi(\tilde{\mathcal{E}}_{\Phi})=\chi(\mathcal{E}_{\Phi}).
  \end{equation}
  
  Now let us consider the mutual information:
  \begin{equation}
    I(\tilde{M},\Phi)=H(\tilde{M})-H(\tilde{M}|\Phi),
  \end{equation}
  where $H$ is the Shannon entropy. The first term $H(\tilde{M})$ is the entropy of the probability distribution of measurement results. Since the dilation has the same measurement statistics, we immediately have $H(\tilde{M})=H(M)$. 
  
  Now let's consider the second term:
  \begin{equation}
    H(\tilde{M}|\Phi)=\sum_{\phi}p_{\Phi}(\phi)H(\tilde{M}|\phi).
  \end{equation}
  To calculate this, we need to find the measurement probabilities of the dilated conditional state $\tilde{\rho}_{\phi}$. These are:
  \begin{equation}
    \begin{aligned}
      \tilde{p}_{j|\phi} &= \mathrm{tr}\left\{\tilde{M}_j\tilde{\rho}_{\phi}\right\}, \\
        &= \mathrm{tr}\left\{\left(I\otimes |j\rangle\langle j|\right)\left(V\rho_{\phi}V^{\dagger}\right)\right\}, \\
        &= \mathrm{tr}\left\{\left(I\otimes |j\rangle\langle j|\right)\left(\sum_{kl}M_k\rho_{\phi}M_l^{\dagger}\otimes |k\rangle\langle l|\right)\right\}, \\
        &= \mathrm{tr}\left\{M_j\rho_{\phi} M_j^{\dagger}\right\}, \\
        &= p_{j|\phi}.
    \end{aligned}
  \end{equation}
  Since the probability distributions are the same, we thus have $H(\tilde{M}|\Phi)=H(M|\Phi)$. Thus we find that the mutual information between our measurement and the parameter is the same for both the original and dilated ensemble:
  \begin{equation}
    I(\tilde{M},\Phi)=I(M,\Phi).
  \end{equation}
  Putting all of the above together gives the CXI equality \cref{eq:CXIStatementPOVM}.
\end{proof}

Thus using the POVM coherence, we find that the CXI equality holds for general quantum measurements.

\section{Example}\label{sec:Measurement}
Let us now consider a brief example to illustrate the ensemble coherence and CXI equality. Suppose we wish to discriminate between two pure states on the Bloch sphere. We take the first state to lie along the $\sigma_x$-axis: 
\begin{equation}
  \rho_0=\lvert\uparrow_x\rangle.
\end{equation}
The second state $\rho_\theta$ is rotated an angle $\theta$ from $\rho_0$ about the $\sigma_z$ axis:
\begin{equation}\label{eq:RotatedState}
  \rho_{\theta}=\cos(\theta/2)\lvert\uparrow_x\rangle+\sin(\theta/2)\lvert\downarrow_x\rangle.
\end{equation}
In this problem of state discrimination, the parameter $\phi$ we are estimating can take two values, $0$ and $\theta$, corresponding to the two possible states $\rho_0,\rho_{\theta}$. The two states are assumed to have equal probability. We will consider three different scenarios, with $\theta$ equal to $\pi/10$, $\pi/2$ and $\pi$.

Let us first study discrimination via projective measurement. A projective measurement on the Bloch sphere is described by a pair of antipodal points representing the basis states of the measurement. From the CXI equality, the ensemble coherence with respect to this basis tells us how efficient measurement in this basis is at discriminating between the two states. The greater the coherence, the more information is being `lost' by the measurement. We note however that the three scenarios will have different Holevo informations. For the perfectly distinguishable case of $\theta=\pi$, the Holevo information is $\chi=\log 2\approx 0.69$. For $\theta=\pi/2$ we have $\chi\approx 0.42$, and for the small separation $\theta=\pi/10$ the Holevo information is $\chi\approx 0.04$.\footnote{If we evaluate the logarithm in \cref{eq:HolevoInformation} in base 2, the Holevo informations are $\chi=1\,\mathrm{bit}$ for $\theta=\pi$, $\chi\approx0.6\,\mathrm{bits}$ for $\theta=\pi/2$, and $\chi\approx 0.05\,\mathrm{bits}$ for $\theta=\pi/10$. The other numbers reported in this section are all ratios of entropies, which are independent of the logarithm base.} When evaluating the efficacy of a measurement basis, we must compare the ensemble coherence with the total Holevo information available.

\begin{figure}[h]
  \centering
  \includegraphics[width=\columnwidth]{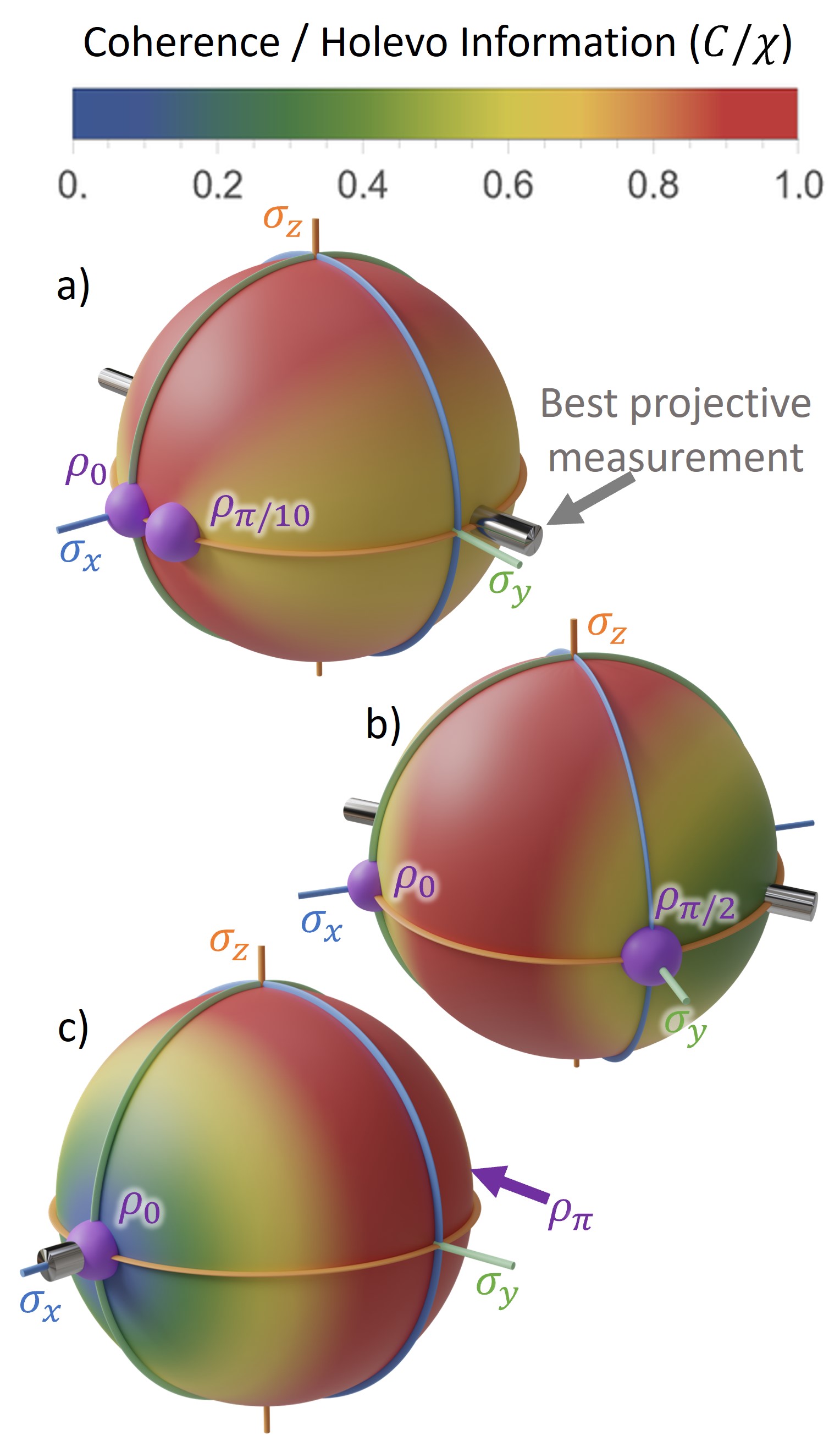}
  \caption{Ensemble coherence in state discrimination. The states $\rho_0,\rho_{\theta}$ are separated by an angle a) $\pi/10$, b) $\pi/2$, and c) $\pi$. Points on the Bloch sphere are coloured by the ensemble coherence of a projective measurement in the corresponding basis, normalised by the Holevo information. From the CXI equality, a larger coherence indicates more lost information, which we can see correlates with bases that do not distinguish between the two states. The silver bar denotes the basis with minimum ensemble coherence, which coincides with the basis given by minimum-error state discrimination theory. As the states grow more distinguishable the coherence in the optimum basis increases, quantifying the advantage that could be gained from a multi-partite measurement.}\label{fig:BlochSphere}
\end{figure}

In \cref{fig:BlochSphere} we graph the ensemble coherence in each possible measurement basis on the Bloch sphere, normalised by the Holevo information. We can see that measurements whose outcomes do not discriminate between the two states have coherence equal to the Holevo information, indicating that they provide no information. Meanwhile the bases with minimum coherence make an angle $\theta/2+\pi/2$ with the $\sigma_x$-axis. This corresponds to the Helstrom measurement from the theory of minimum-error quantum state discrimination \cite{bergou_discrimination_2010,helstrom_quantum_1969}. 

\begin{figure}
  \centering
  \includegraphics[width=\columnwidth]{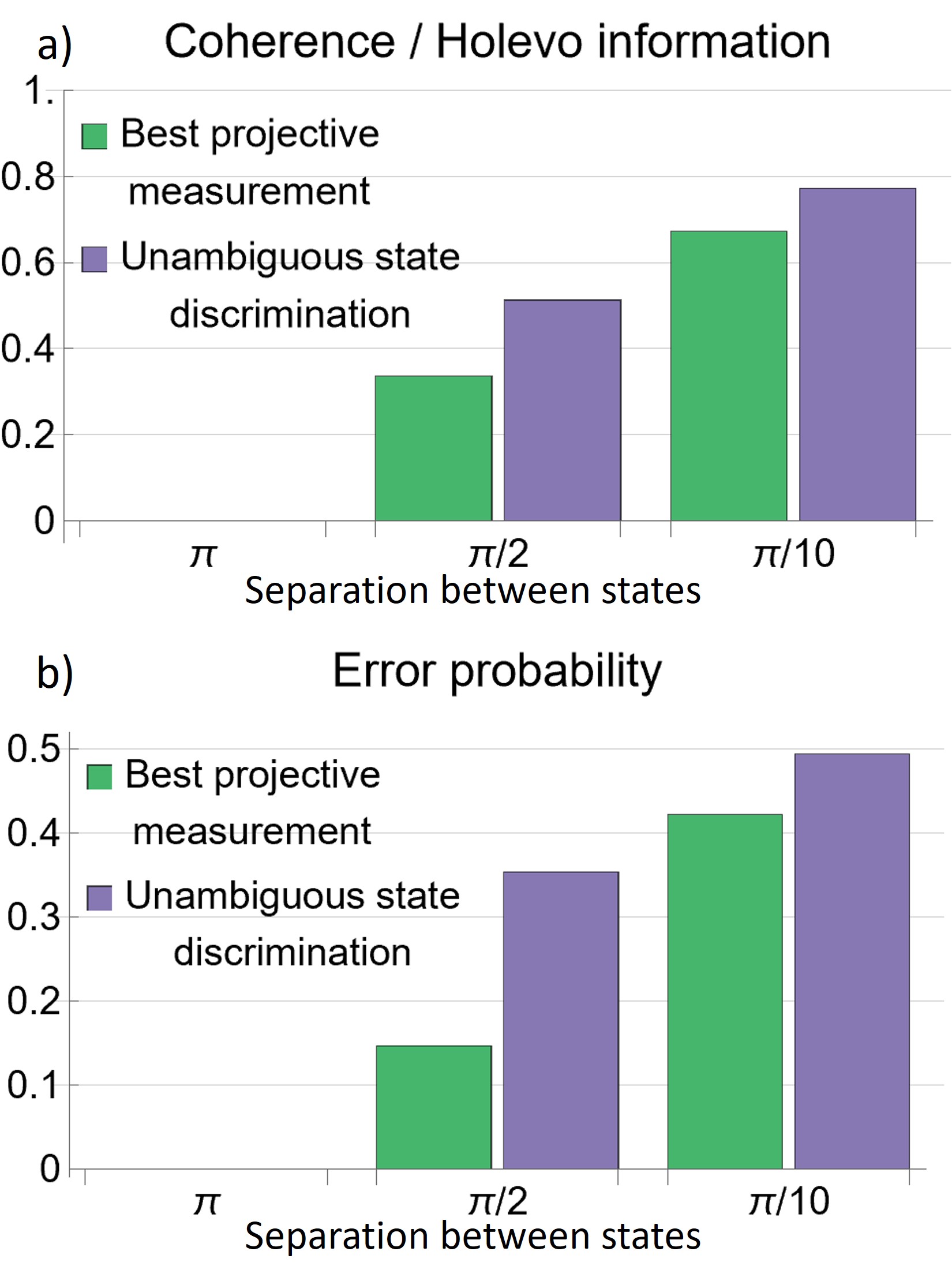}
  \caption{a) We compare the ensemble coherence of the best projective measurement, with the ensemble coherence of the POVM for unambiguous state discrimination. Coherences are normalised with respect to the Holevo information. When the states are separated by an angle of $\pi$ both coherences are zero, since the measurements can perfectly discriminate between the states. As the states grow less distinguishable however these measurements lose information, as quantified by the increasing coherece. In general the POVM has larger coherence, indicating it will take more measurements on average to successfully discriminate between the states. b) The error probability if we attempt to discriminate between the two states based on a single measurement result. We can see that an increase in the ensemble coherence correlates with an increase in error probability.}\label{fig:CoherencePlots}
\end{figure}

Now let us study the ensemble coherence for a POVM. In unambiguous state discrimination we consider a three-element measurement, with outcomes corresponding to $\rho_0$, $\rho_{\theta}$, and `unsure' \cite{barnett_quantum_2009,bergou_discrimination_2010}. The first two outcomes correspond respectively to measurement operators:
\begin{equation}
  \begin{aligned}
    M_0 &= c(I-\rho_{\theta})=c\rho_{\theta+\pi}, \\
    M_{\theta} &= c(I-\rho_{0}) = c\rho_{\pi}.
  \end{aligned}
\end{equation}
Here $c$ is a constant to be determined, which is the same for both $M_0,M_{\theta}$ since the problem is symmetric with regards to the two states. The measurement operator corresponding to the unsure outcome is then
\begin{equation}
  \begin{aligned}
    M_{?}  &= I-M_0-M_{\theta}, \\
      &= I-c\left(\rho_{\theta+\pi}+\rho_{\pi}\right).\label{eq:PiZero}
  \end{aligned}
\end{equation}
The probability of this outcome is
\begin{align}
  p_{?} &= \tr{M_{?}\frac{1}{2}\left(\rho_0+\rho_{\theta}\right)}, \\
    &= \frac{1}{2}\tr{\left(\rho_0+\rho_{\theta}\right)-c\left(\rho_{\theta+\pi}\rho_0+\rho_{\pi}\rho_{\theta}\right)}, \\
    &= 1-c\sin^2\left(\frac{\theta}{2}\right).\label{eq:PQuestion}
\end{align}
To minimise \cref{eq:PQuestion}, we must make $c$ as large as possible while preserving positivity of the operator $M_{?}$ in \cref{eq:PiZero}. This is satisfied by $c=1/\lambda$, where $\lambda$ is the largest eigenvalue of $\rho_{\pi}+\rho_{\theta+\pi}$.

In \cref{fig:CoherencePlots} a) we show the ensemble coherences for the POVM with measurement operators $\{M_0,M_{\theta},M_{?}\}$, and compare these with the optimal projective measurement. We can see that apart from the perfectly distinguishable case of $\theta=\pi$ when both are zero, the coherence of unambiguous state discrimination is larger. Thus we will on average require more measurements to distinguish the states using unambiguous state discrimination, than by the optimal projective measurement.

Let us consider the magnitudes of the coherence values in \cref{fig:CoherencePlots} a). For $\theta=\pi$, where $\rho_0,\rho_{\theta}$ are orthogonal, the coherence is zero for both projective measurements and the POVM we considered. Thus no better measurement is possible than projective measurement on a single system. As the states grow increasingly indistinguishable however, the coherence increases to a sizeable fraction of the Holevo information. This means that a protocol which performed a collective measurement on multiple probes could obtain a substantial information gain per probe. Roughly speaking, this is possible because the states $\rho_0^{\otimes n},\rho_{\theta}^{\otimes n}$ are `more orthogonal' than $\rho_0,\rho_{\theta}$, making it easier to discriminate between them.\footnote{The fidelity between $\rho_0^{\otimes n}$ and $\rho_{\theta}^{\otimes n}$ decreases exponentially with the number $n$ of probes, as can be seen by computing $(\langle\psi\rvert)^{\otimes n}(\lvert\phi\rangle)^{\otimes n}=(\langle\psi|\phi\rangle)^n$.}

As a demonstration that ensemble coherence does indeed correspond to lost information, we will explicitly compute the success probabilities when performing state discrimination based off a single measurement. In the projective case the two measurement operators for the basis with minimum coherence are
\begin{equation}
  \begin{aligned}
    \Pi_0 &= \rho_{\theta/2-\pi/2}, \\
    \Pi_{\theta} &= \rho_{\theta/2+\pi/2}.
  \end{aligned}
\end{equation}
We projectively measure our state in this basis, and estimate the parameter to be $0$ if we observe $\Pi_0$, or $\theta$ if we observe $\Pi_{\theta}$. The success probability $p_{\mathrm{s}}$ is then:
\begin{equation}
  \begin{aligned}
    p_{\mathrm{s}}&= p_{\Phi}(0)p_{M|\Phi}(0|0)+p_{\Phi}(\theta)p_{M|\Phi}(\theta|\theta), \\
      &= \frac{1}{2}\left(\tr{\Pi_0\rho_0}+\tr{\Pi_{\theta}\rho_{\theta}}\right).
  \end{aligned}
\end{equation}
For the case of unambiguous state discrimination, as before we estimate a parameter value of $0$ or $\theta$ if we observe $M_0$ or $M_{\theta}$ respectively. If we observe the third outcome $M_{?}$ we gain no information, and our best strategy is to randomly guess either $0$ or $\theta$, in which case we will be correct half the time. Our success probability is then
\begin{equation}
  \begin{aligned}
    p_{\mathrm{s}}&= p_{\Phi}(0)\left(p_{M|\Phi}(0|0)+\frac{1}{2}p_{M|\Phi}(?|0)\right) \\
    &\phantom{=}+p_{\Phi}(\theta)\left(p_{M|\Phi}(\theta|\theta)+\frac{1}{2}p_{M|\Phi}(?|\theta)\right), \\
      &= \frac{1}{2}\left(\tr{\left(M_0+\frac{1}{2}M_{?}\right)\rho_0}+\tr{\left(M_{\theta}+\frac{1}{2}M_{?}\right)\rho_{\theta}}\right).
  \end{aligned}
\end{equation}
Since coherence represents lost information, it is appropriate to compare this with the error probability $1-p_{\mathrm{s}}$. We graph this in \cref{fig:CoherencePlots} b) for both projective measurement and the POVM. We can see that unambiguous state discrimination is more likely to fail than projective measurement, as expected by its larger ensemble coherence. The probability of error also increases as the separation between the states decreases, again in line with the increasing coherence.

This examples demonstrates that the ensemble coherence does indeed represent information `locked away in the coherences', and inaccessible to the chosen measurement. Graphing this allows us to visualise and compare different measurement schemes. In the supplementary material we study a more complex example of adaptive measurement. All code and data for the results in this section and the supplementary material can be downloaded from \cite{lecamwasam_cxi_2023}

\nocite{demkowicz-dobrzanski_quantum_2015,tsang_quantum_2016,weisstein_hermite_nodate,trees_detection_2013}
\section{Conclusion}\label{sec:Conclusion}
This work establishes a fundamental connection between coherence and Bayesian metrology. 
The information-theoretic tools provide a general proof, which holds regardless of how the parameter is encoded in the state, and for both projective and POVM measurements. In particular the CXI equality applies even to discontinuous situations such as state discrimination, where the quantum Fisher information is not applicable.

An obvious direction for generalisation of this work would be to study multiparameter estimation \cite{demkowicz-dobrzanski_multi-parameter_2020}. It would also be fruitful to see what light existing results from the resource theory of coherence could shed on metrology. For example, given the current interest in metrology for fundamental tests of physics \cite{bose_massive_2023,moore_searching_2021}, coherence-based measures of macroscopicity could help understand how quantum effects could be observed on large scales \cite{yadin_general_2016,kwon_coherence_2018,hertz_quadrature_2020}.

There is an intuitive reason for why the coherence measure in the CXI relation is the relative entropy of coherence. Classically, the relative entropy between two probability distributions $p$ and $q$ can be thought of as the information lost if $q$ is used to approximate $p$ \cite[\S 2.1]{anderson_model_2002}. A projective measurement in the basis $M$ effectively approximates $\rho$ with the decohered state $\Delta_M[\rho]$, since all information contained in the off-diagonal elements is lost. Moreover, the appearance of the POVM coherence \cite{bischof_resource_2019} when we generalised the CXI equality shows that it is indeed a natural generalisation of the standard relative entropy, and provides a novel operational interpretation. However, there are many other measures of coherence \cite{streltsov_colloquium_2017}. It would be interesting to study their meanings when applied to ensembles along the lines of \cref{eq:EnsembleCoherence}.



\section{Acknowledgements}
We are especially grateful to Michael Hall for introducing us to Bayesian quantum metrology, as well as for fruitful discussions with Simon Haine, Stuart Szigeti, and Zain Mehdi, and comments on the manuscript by Lorc\'an Conlon, Simon Yung, and the anonymous referees. This research was funded by the Australian Research Council Centre of Excellence for Quantum Computation and Communication Technology (Grant No. CE110001027), the Singapore Ministry of Education Tier 2 (Grant No. MOE-T2EP50221- 0005),  the Singapore Ministry of Education Tier 1 (Grant No. RG77/22), the National Research Foundation Singapore, and the Agency for Science, Technology and Research (A*STAR) under its QEP2.0 programme (Grant No. NRF2021-QEP2-02-P06), and Grant No. FQXI-RFP-IPW-1903 `Are quantum agents more energetically efficient at making predictions?' from the Foundational Questions Institute and Fetzer Franklin Fund (a donor-advised fund of Silicon Valley Community Foundation). RL was supported by an Australian Government Research Training Program (RTP) scholarship. Calculations for an early version of this work were performed on the Deigo cluster computer at the Okinawa Institute of Science and Technology.  


%

\clearpage

\end{document}


\author{Ruvi Lecamwasam\,\orcidlink{0000-0001-6531-3233}}
\email{me@ruvi.blog}
\affiliation{A*STAR Quantum Innovation Centre (Q.Inc), Institute of Materials Research and Engineering (IMRE), Agency for Science, Technology and Research (A*STAR), 2 Fusionopolis Way, 08-03 Innovis 138634, Singapore \looseness=-1}
\affiliation{Quantum Machines Unit, Okinawa Institute of Science and Technology Graduate University, Onna, Okinawa 904-0495, Japan \looseness=-1}
\affiliation{Centre for Quantum Computation and Communication Technology, Department of Quantum Science and Technology, Australian National University, ACT 2601, Australia \looseness=-1}
\author{Syed Assad\,\orcidlink{0000-0002-5416-7098}}
\affiliation{A*STAR Quantum Innovation Centre (Q.Inc), Institute of Materials Research and Engineering (IMRE), Agency for Science, Technology and Research (A*STAR), 2 Fusionopolis Way, 08-03 Innovis 138634, Singapore \looseness=-1}
\affiliation{Centre for Quantum Computation and Communication Technology, Department of Quantum Science and Technology, Australian National University, ACT 2601, Australia \looseness=-1}
\author{Joseph J. Hope\,\orcidlink{0000-0002-5260-1380}}
\affiliation{Department of Quantum Science and Technology, Australian National University, ACT 2601, Australia \looseness=-1}
\author{\\Ping Koy Lam\,\orcidlink{0000-0002-4421-601X}}
\affiliation{A*STAR Quantum Innovation Centre (Q.Inc), Institute of Materials Research and Engineering (IMRE), Agency for Science, Technology and Research (A*STAR), 2 Fusionopolis Way, 08-03 Innovis 138634, Singapore \looseness=-1}
\affiliation{Centre for Quantum Technologies, National University of Singapore, 3 Science Drive 2, Singapore 117543 \looseness=-2}
\affiliation{Centre for Quantum Computation and Communication Technology, Department of Quantum Science and Technology, Australian National University, ACT 2601, Australia \looseness=-1}
\author{Jayne Thompson\,\orcidlink{0000-0002-3746-244X}}
\email{thompson.jayne2@gmail.com}
\affiliation{Horizon Quantum Computing, 05-22 Alice@Mediapolis, 29 Media Circle, Singapore 138565 \looseness=-1}
\affiliation{Institute of High Performance Computing, Agency for Science, Technology, and Research (A*STAR), Singapore 138634, Singapore \looseness=-2}
\author{Mile Gu\,\orcidlink{0000-0002-5459-4313}}
\email{mgu@quantumcomplexity.org}
\affiliation{Nanyang Quantum Hub, School of Physical and Mathematical Sciences, Nanyang Technological University,  21 Nanyang Link, Singapore 639673 \looseness=-2}
\affiliation{Centre for Quantum Technologies, National University of Singapore, 3 Science Drive 2, Singapore 117543 \looseness=-2}
\affiliation{MajuLab, CNRS-UNS-NUS-NTU International Joint Research Unit, UMI 3654, 117543, Singapore\looseness=-2}

\date{\today}

\title{Supplementary to `The relative entropy of coherence quantifies performance in Bayesian metrology'}

\begin{abstract}
In this supplementary material we provide another example of the ensemble coherence in metrology, applied to a problem of adaptive measurement in the Hermite-Gauss basis. We show how one can design a measurement scheme by repeatedly minimising the ensemble coherence, which is equivalent to maximising the mutual information. Code generating all results from this section may be downloaded from \cite{lecamwasam_cxi_2023}.
\end{abstract}

\maketitle

\setcounter{section}{0}
\setcounter{equation}{0}
\setcounter{figure}{0}
\setcounter{table}{0}
\setcounter{page}{1}
\renewcommand{\theequation}{S\arabic{equation}}
\renewcommand{\thefigure}{S\arabic{figure}}
\renewcommand{\thesection}{S\arabic{section}}

\section{Adaptive measurement example}\label{sec:Measurement}
To illustrate the ensemble coherence and CXI equality, we will apply them to an example of parameter estimation. Graphing the ensemble coherence will allow us to compare the information gain from different measurement bases. Repeatedly choosing the basis with minimum coherence will provide an optimal adaptive measurement schemes. Moreover, the absolute value of the coherence quantifies how much more information could be gained using more resources, such as other bases or multipartite measurements. As previously we will take the parameter to be discrete, but the continuous case is analogous with sums replaced by integrals. Detailed calculations will be left to \cref{app:AdaptiveMeasurement}.

Many common metrological scenarios involve estimation of a phase. Analysis of these situations however is complicated by the topology of the circle, which requires that phases of $\epsilon$ and $2\pi-\epsilon$ be `close' for small $\epsilon$. For simplicity we consider here estimation of a linear parameter. The ensemble coherence and CXI equality can still be applied to phase estimation, only more care is required to properly quantify the error in the estimate \cite[\S IV.A.2]{demkowicz-dobrzanski_quantum_2015}. 

We will consider estimating the location of a point source of light, via photodetection in the basis of Hermite-Gauss modes. The wavefunction of a photon emitted from a Gaussian source centered at $\phi$ with unit variance is
\begin{equation}\label{eq:PointSourceWF}
  \psi_{\phi}(x)=\frac{1}{(2\pi)^{1/4}}\exp\left(-\frac{(x-\phi)^2}{4}\right).
\end{equation}
Here $x$ is the position of the emitted photon, and $\phi$ is the location of the source, which is the parameter we will attempt to estimate. Then a photon emitted from a point source at $\phi$ has state $\rho_{\phi}=\ketbra{\psi_{\phi}}{\psi_{\phi}}$, where $\ket{\psi_{\phi}}=\int\psi_{\phi}(x)\ket{x}\mathrm{d}x$. 

Our prior knowledge of the location of the source is represented by a probability distribution $p_{\Phi}(\phi)$. We take this to be a mixture of discrete Gaussians centered on $\phi=\pm 1$ with variance $0.5$, i.e. half the width of the photon wavefunction. This prior is plotted in \cref{fig:Adaptive}a. Such a double-peaked structure will benefit from adaptive strategies, as well as entangled probes which can make use of interference. This thus provides a good demonstration of what can be learned from the ensemble coherence. 

The ensemble state $\rho_{\Phi}$ of an emitted photon is then
\begin{equation}
  \rho_{\Phi}=\sum_{\phi}p_{\Phi}(\phi)\rho_{\phi}.
\end{equation}
When we gain information from a measurement, we update the probability distribution of $\phi$ using Bayes' rule, and hence update the ensemble corresponding to the next photon. The information that we gain will depend on the choice of measurement basis.  A natural choice may be photodetection in the position basis. However, we will consider a less well-matched basis, and use the ensemble coherence to construct an optimum sequence of adaptive measurements.

Recent work in superresolution imaging has demonstrated that the separation between two closely-spaced point sources may be optimally estimated by sampling in the basis of \emph{Hermite-Gauss (HG) modes} \cite{tsang_quantum_2016}. For integer $q$ these have wavefunction
\begin{equation}\label{eq:HermiteGaussModes}
  h_q(x)=\frac{1}{(2\pi\sigma_h^2)^{1/4}}\frac{1}{\sqrt{2^qq!}}H_q\left(\frac{x}{\sqrt{2}\sigma_h}\right)e^{-x^2/4\sigma_h^2},
\end{equation}
where $\sigma_h$ is the width of the mode, and $H_q$ is the $q$th Hermite polynomial \cite{weisstein_hermite_nodate}. The HG modes form an orthogonal basis of wavefunctions. The first part of the superresolution procedure involves estimating the centre of the point sources. Thus suppose that we had constructed a superresolution apparatus, and wished to use it to estimate the centre of a single point source. This corresponds to projective measurement in the basis of Hermite-Gauss modes. To reflect imperfect experimental apparatus we take $\sigma_h=2$. Thus our sampling basis has twice the width of the photon wavefunctions, making it harder to infer the source location.

\begin{figure}[!]
  \centering
    \begin{tikzpicture}
        \node[anchor=south west,inner sep=0] (imgHGModes) {\includegraphics[width=\columnwidth]{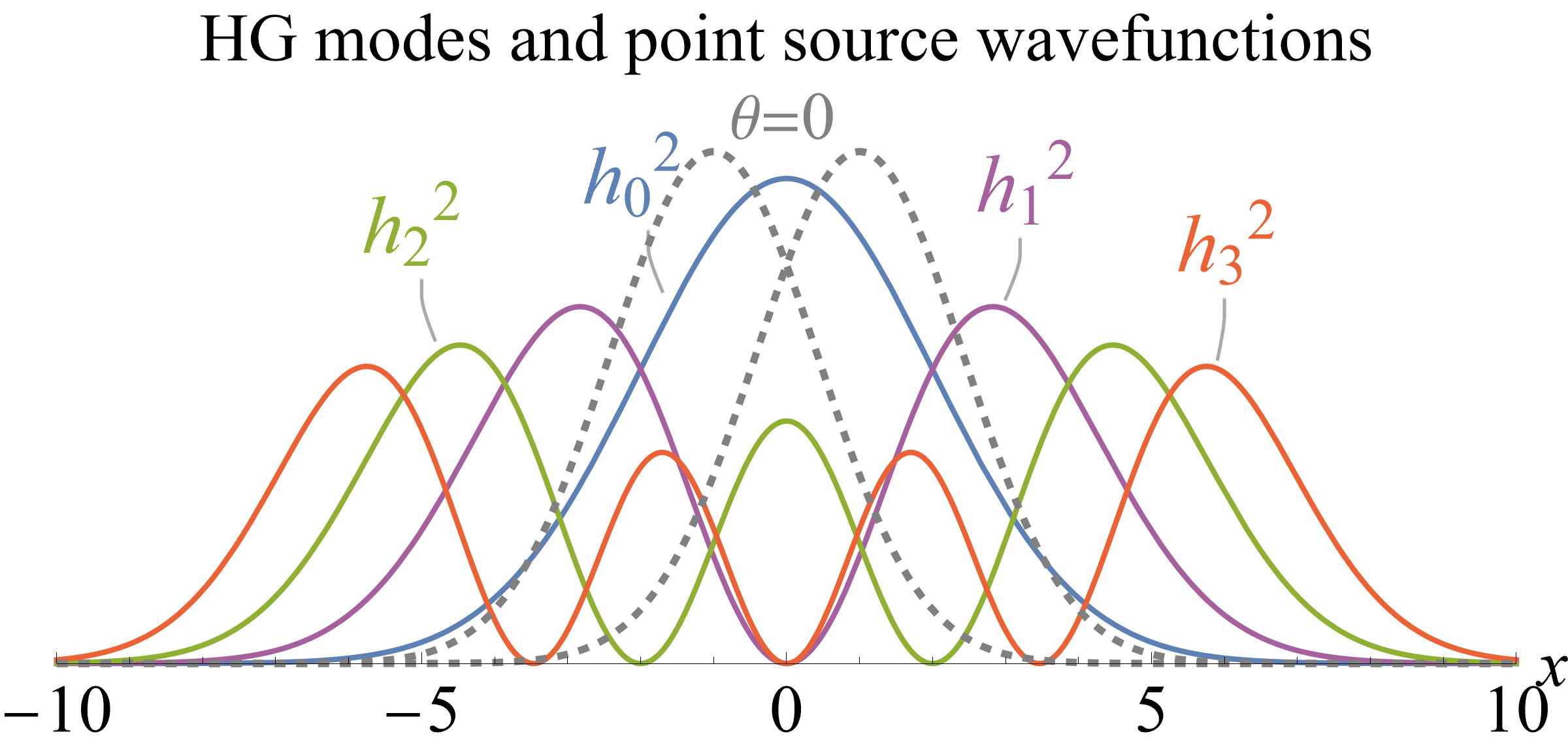}};
        \begin{scope}[x={(imgHGModes.south east)},y={(imgHGModes.north west)}]
            \node[anchor=north west] at (0,1) {a)};
        \end{scope}
    \end{tikzpicture}%
    \vspace{1em}
    \begin{tikzpicture}
        \node[anchor=south west,inner sep=0] (imgHGOptimal) {\includegraphics[width=\columnwidth]{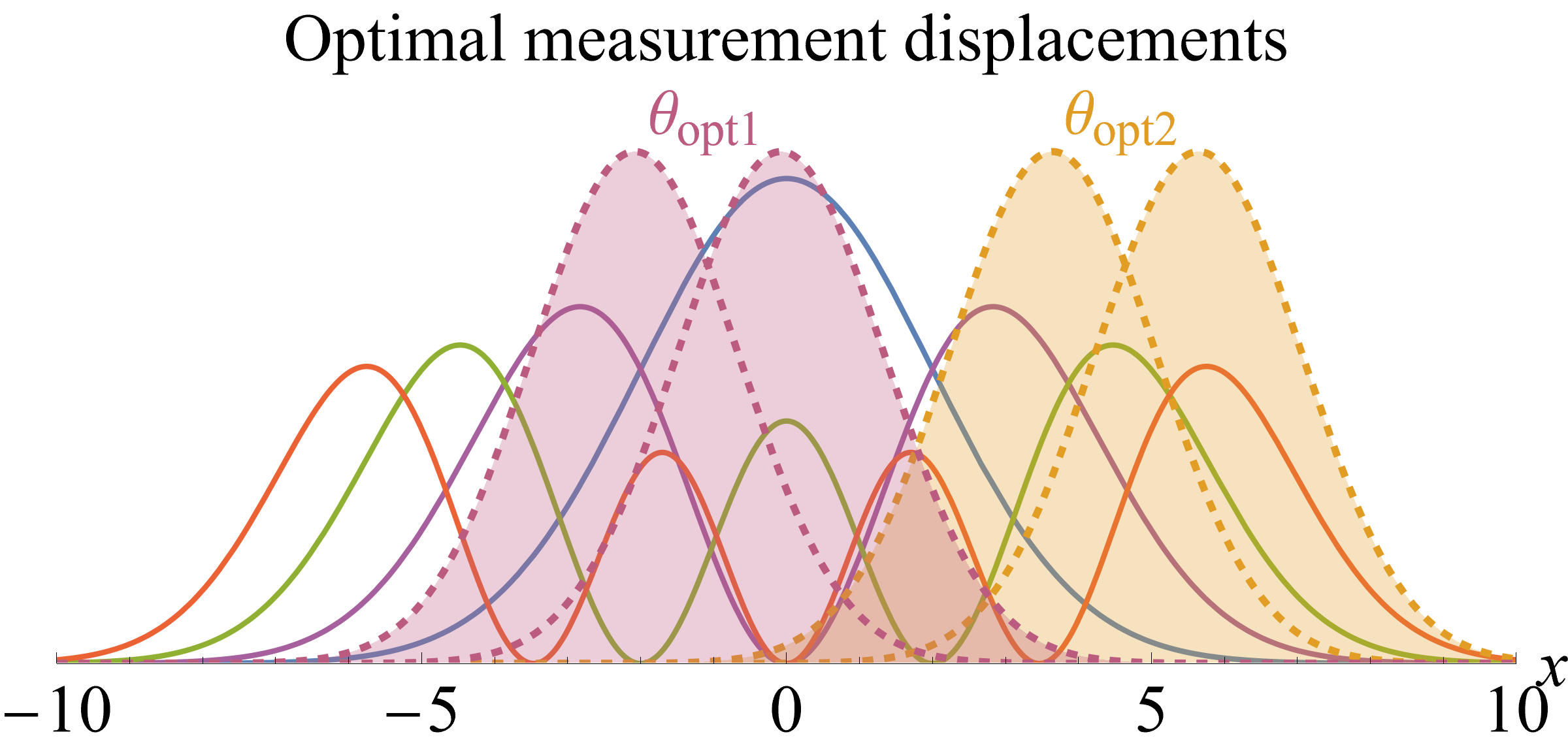}};
        \begin{scope}[x={(imgHGOptimal.south east)},y={(imgHGOptimal.north west)}]
            \node[anchor=north west] at (0,1) {b)};
        \end{scope}
    \end{tikzpicture}
  \caption{a) Solid lines denote the squared Hermite-Gauss modes defined in \cref{eq:HermiteGaussModes}, which form the measurement basis. As $q$ grows larger there are more oscillations and the mode extends further along the $x$-axis. The dashed lines show photon wavefunctions \cref{eq:PointSourceWF} emitted from sources at $\phi=-1$ and $\phi=+1$. Due to symmetry of the HG modes about zero, measurements cannot discriminate between these. b) We can break the symmetry by shifting our measurement basis by $\theta$. Shifts $\theta_{\mathrm{opt}1},\theta_{\mathrm{opt}2}$ minimise the ensemble coherence of the prior distribution. These position the wavefunctions so that the HG modes optimally differentiate between source locations.}\label{fig:HGModes}
\end{figure}

\begin{figure}[!]
  \centering
    \begin{tikzpicture}
        \node[anchor=south west,inner sep=0] (imgPDists) {\includegraphics[width=\columnwidth]{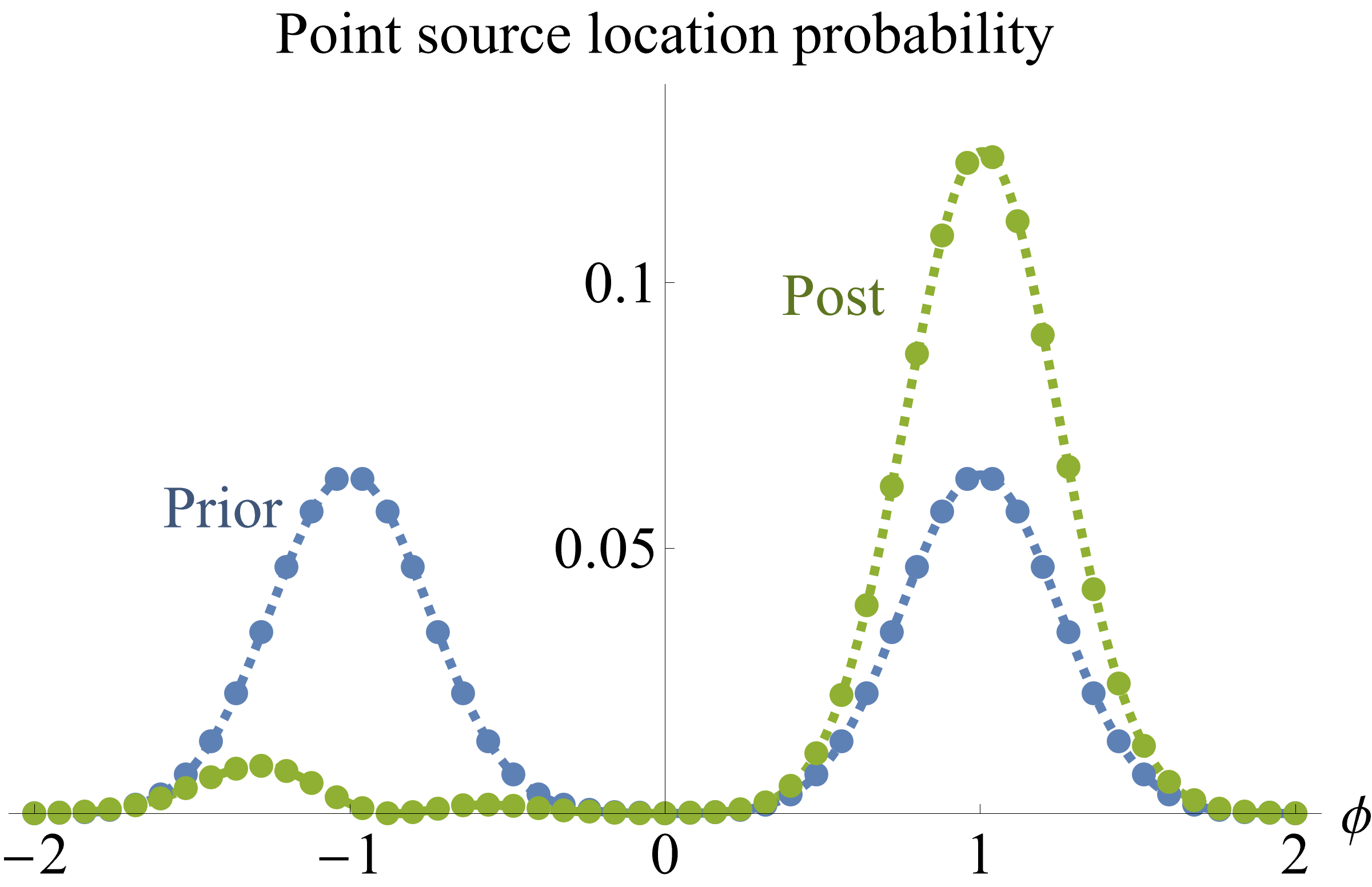}};
        \begin{scope}[x={(imgPDists.south east)},y={(imgPDists.north west)}]
            \node[anchor=north west] at (0,1) {a)};
        \end{scope}
    \end{tikzpicture}%
    \vspace{1em}
  \centering
    \begin{tikzpicture}
        \node[anchor=south west,inner sep=0] (imgCoherences) {\includegraphics[width=\columnwidth]{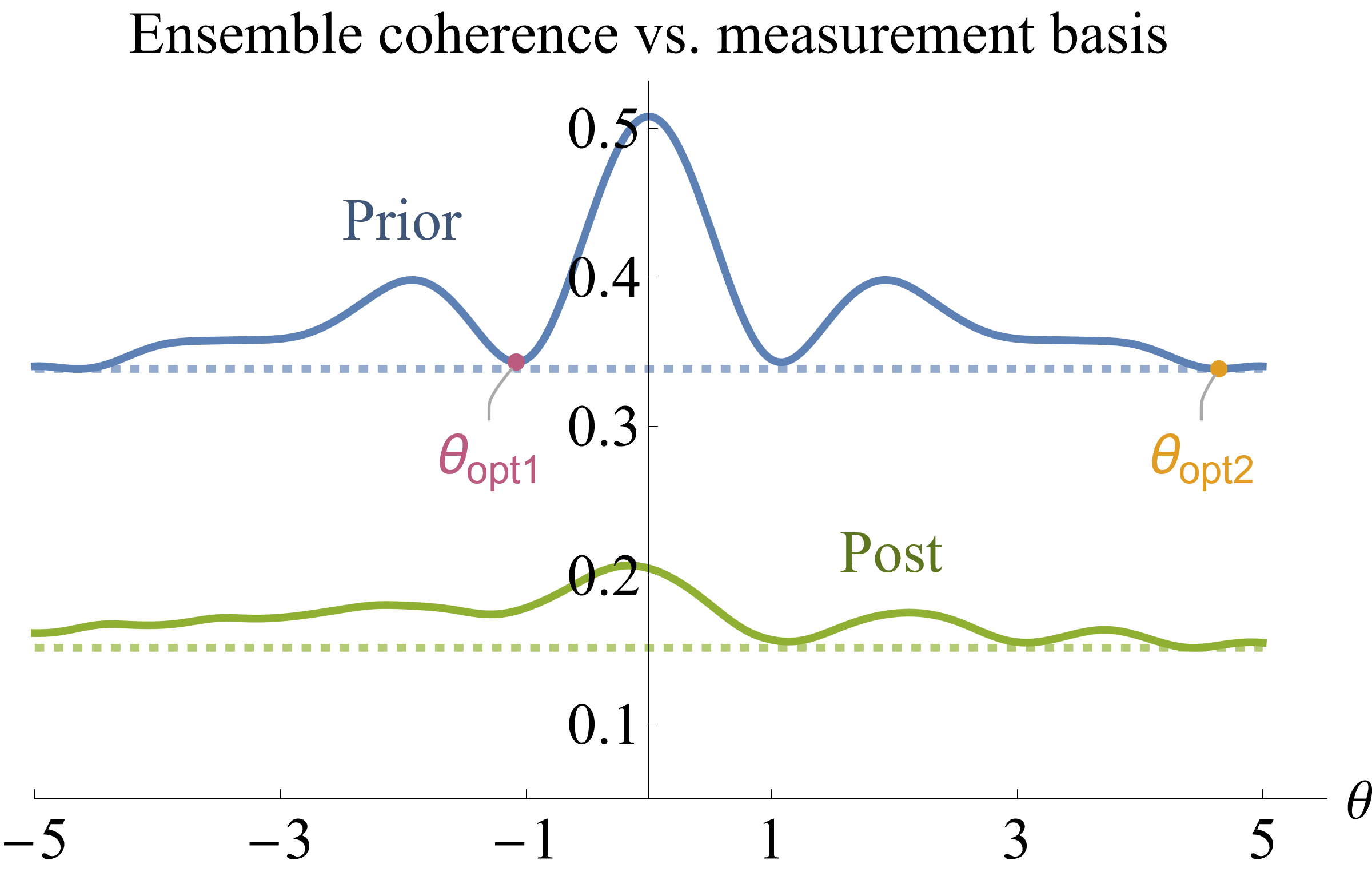}};
        \begin{scope}[x={(imgCoherences.south east)},y={(imgCoherences.north west)}]
            \node[anchor=north west] at (0,1) {b)};
        \end{scope}
    \end{tikzpicture}%
    \vspace{1em}
  \centering
    \begin{tikzpicture}
        \node[anchor=south west,inner sep=0] (imgPerformance) {\includegraphics[width=\columnwidth]{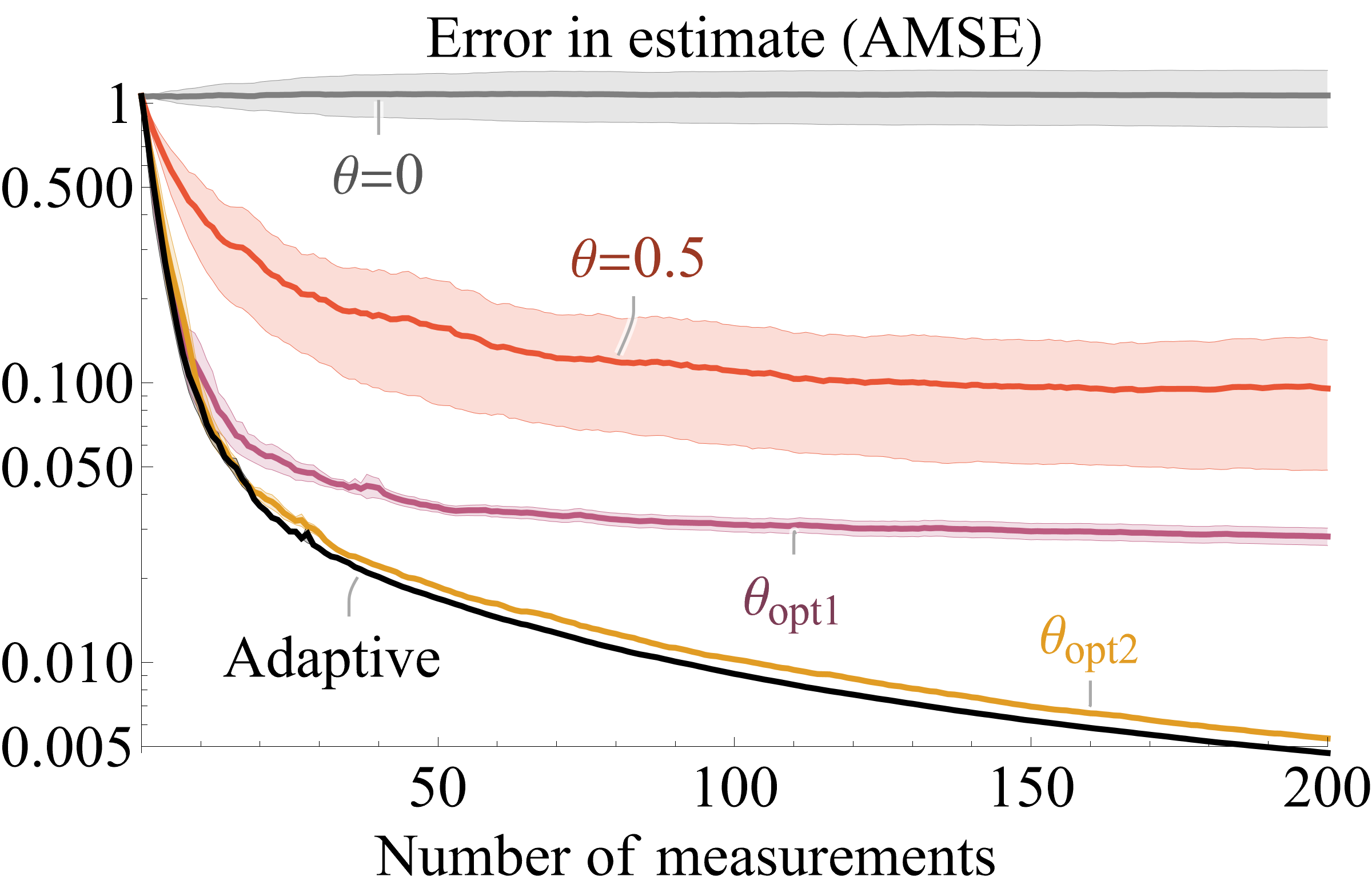}};
        \begin{scope}[x={(imgPerformance.south east)},y={(imgPerformance.north west)}]
            \node[anchor=north west] at (0,1) {c)};
        \end{scope}
    \end{tikzpicture}%
  \caption{a) Probability distributions for the position of the point source. These are discrete, with the dashed lines included for illustration only. `Prior' is the initial distribution. `Post' is the distribution after a single measurement at $\theta_{\mathrm{opt}1}$ yielding $h_2$. b) Ensemble coherences for the same distributions when measured in the Hermite-Gauss basis with displacement $\theta$. The dashed horizontal lines indicate the minimum value. The coherence of the prior distribution at $\theta=0$ is approximately equal to the Holevo information of 0.53, meaning in this measurement basis almost all the information is `locked away' in coherences. c) Performance of measurement with each shift, and how it varies with number of measurements. The solid lines denotes the Average Mean Squared Error (AMSE), and the bands show the corresponding variances. `Adaptive' uses the ensemble coherence to compute the optimal displacement for each measurement, while the other curves repeatedly measure at a constant value of $\theta$.}\label{fig:Adaptive}
\end{figure}

We plot the squared amplitudes of the HG modes in \cref{fig:HGModes}a, as well as photon wavefunctions from sources centered at $\phi=-1$ and $\phi=+1$. We can see that the $\lvert h_q(x)\rvert^2$ are symmetric about $x=0$, thus measurements do not tell us on which side of the origin the source lies. This can be rectified by shifting the centre of the modes, and instead sampling in the basis
\begin{equation}
  \{h_q(x-\theta)\},
\end{equation}
for some displacement $\theta$. Doing this will break the symmetry, so intuitively should provide more information from the same number of measurements. The optimum value of $\theta$ will depend on the prior distribution of $\phi$. Moreover when measuring a series of emitted photons, we should continually adapt $\theta$, incorporating the information from previous measurements. We will find the optimum $\theta$ using the ensemble coherence.

The ensemble coherence quantifies the informational penalty paid due to a choice of measurement basis. Thus minimising coherence will maximise the information gain of the measurement, and consequently minimise the mean squared error. The ensemble coherence depends on the ensemble $\mathcal{E}_{\Phi}$, and hence the probability distribution of $\Phi$. Initially this will be the prior distribution. As we make more measurements, we can use Bayesian probability theory to update the distribution for $\Phi$, and hence find the new optimum displacements. The calculations for this are detailed in \cref{app:AdaptiveMeasurement}.

\cref{fig:Adaptive}a plots the `Prior' distribution $p_{\Phi}(\phi)$ of point source locations before any measurement, and the distribution `Post' incorporating information from a single measurement.
The corresponding ensemble coherences for different shifts $\theta$ are shown in \cref{fig:Adaptive}b. We can see that the prior coherence is symmetric about zero, with local minima at $\theta_{\mathrm{opt1}}\approx\pm1.1$ and $\theta_{\mathrm{opt2}}\approx \pm4.6$. We show the relative position of the photon wavefunctions after these shifts in \cref{fig:HGModes}b. For the post-measurement state the coherence is no longer symmetric, with all optima lying at positive $\theta$.

Once a series of measurements has been taken, we can take the mean of the post-measurement probability distribution to obtain an estimate of $\phi$. This strategy is proven to provide the estimate with the minimum Average Mean-Squared Error (AMSE) \cite[\S 4.2]{trees_detection_2013}. In \cref{fig:Adaptive}c we plot the AMSE after measuring at both constant $\theta$, and with an adaptive scheme that for each measurement picks the shift with minimum ensemble coherence. This plot was generated by randomly simulating 480 sequences of 200 measurements each. The solid lines in \cref{fig:Adaptive}c denote the mean error value. The shaded regions show the variance of the errors, an indicator of how consistent the performance is. Details of these calculations are provided in \cref{app:AdaptiveMeasurement}. 

We can see that $\theta=0$ performs very poorly, and a substantial improvement is gained by breaking the symmetry with $\theta=0.5$. Both of these are outperformed by measuring at $\theta_{\mathrm{opt}1}$, the first local minimum for the Prior distribution as identified by the ensemble coherence. The resulting error is also much more consistent. Substantially better performance is obtained from the second optimum $\theta_{\mathrm{opt}2}$. This may seem surprising, since the difference in coherence between the two shifts is minimal for the Prior distribution. However as we can see in \cref{fig:HGModes}b, $\theta_{\mathrm{opt}2}$ positions the possible source locations so that fringes of the HG modes can unambiguously differentiate between them, and thus is more likely to also be optimal for post-measurement distributions. All of these are outperformed by the Adaptive scheme, which varies $\theta$ for each measurement using the ensemble coherence. We note that this strategy belongs to the class of `greedy' algorithms, which repeatedly maximise short-term gain. 


The constant-shift strategies of $\theta=0,0.5,\theta_{\mathrm{opt}1}$ eventually plateau. Measuring adaptively at the optimum shift however results in a continually decreasing error, with even less variance. Since the adaptive strategy obtains the maximum information from each individual measurement, this curve also provides a bound on the error of any greedy adaptive measurement strategy. While we do not directly observe a plateauing with $\theta_{\mathrm{opt}2}$, an increasing gap opens up between its performance and that of the Adaptive strategy. However the performance is almost identical for a small number of measurements. Thus the coherence allows us to identify both the optimum adaptive scheme, and also non-adaptive schemes whose performance is close to optimal.

Graphing the ensemble coherence in \cref{fig:Adaptive}b also allows us to understand the `measurement landscape'. We can see how sensitive the information gain is to perturbations in the value of $\theta$, and how this changes as we gain more information. Moreover, the value of the ensemble coherence quantifies the difference between information gain and the optimal Holevo information. Since this is non-zero even at the optimum displacements, we know there is advantage to be gained by a more general measurement, using another basis, a POVM, or a multi-photon measurement. Interestingly, the coherence is substantially larger in the `Prior' distribution, likely due to the double-peaked distribution allowing an advantage through some interference scheme. For the posterior distribution the coherence is both lower and flatter, indicating that measurement at any $\theta$ approaches the Holevo information.

\section{Calculations from adaptive measurement example}\label{app:AdaptiveMeasurement}
This section contains details on the example from \cref{sec:Measurement}. The parameter $\phi$ being estimated is the location of the point source. We take this to be discrete for ease of numerical simulation, but results and calculations are analogous for the continuous case. The Prior distribution from \cref{fig:Adaptive}a is constructed by first mixing continuous Gaussians of mean of one-half:
\begin{equation}
  \frac{1}{2}\left(\exp\left(-\frac{(\phi+1)^2}{2(1/2)^2}\right)+\exp\left(-\frac{(\phi-1)^2}{2(1/2)^2}\right)\right),
\end{equation}
selecting fifty points of $\phi$ evenly spaced between $-2$ and $+2$, then normalising the result. This large number of possible locations allows us to approximate a continuous estimation scenario.

The state of a photon emitted from a point source located at $\phi$ is
\begin{equation}
  \ket{\psi_{\phi}}=\int_{-\infty}^{+\infty}\psi_{\phi}(x)\ket{x}\dd x,
\end{equation}
where the wavefunction $\psi_{\phi}$ is defined in \cref{eq:PointSourceWF}. This is measured in the basis of Hermite-Gauss modes. We will denote by $\ket{q,\theta}$ the $q$th mode from \cref{eq:HermiteGaussModes} displaced by the shift $\theta$:
\begin{equation}
  \ket{q,\theta} = \int_{-\infty}^{+\infty} h_q(x-\theta)\ket{x}\dd x.
\end{equation}
A general state can then be expanded as:
\begin{equation}
  \begin{gathered}
  \ket{\psi_{\phi}}=\sum_{q=0}^{\infty} c_q\ket{q,\theta}, \\
  c_q=\int_{-\infty}^{+\infty} h_q(x-\theta)\psi_{\phi}(x)\dd x.
  \end{gathered}
\end{equation}
The corresponding density matrix is then
\begin{equation}
  \rho_{\phi}=\sum_{p,q=0}^{\infty}c_pc_q\ketbra{p,\theta}{q,\theta},
\end{equation}
where all coefficients are real (since the wavefunctions are real). 

The density matrix $\rho_{\phi}$ is the state of a photon emitted by a point source at a known location $\phi$. If $\rho_{\phi}$ is projectively measured in the Hermite-Gauss basis, the measurement probabilities are the diagonal elements of $\rho_{\phi}$ in the $\ket{p,\theta}$ matrix representation. These diagonal elements correspond to the probability distribution of the measurements conditioned on the parameter:
\begin{equation}
  p_{M|\Phi}(m|\phi).
\end{equation}
Averaging this over the possible values of $\phi$ gives the overall measurement probability distribution for an unknown source location:
\begin{equation}
  p_M(m)=\sum_{\phi} p_{M|\Phi}(m|\phi)p_{\Phi}(\phi).
\end{equation}
Putting these together, the updated probability distribution for $\Phi$ incorporating a measurement result follows from Bayes' rule:
\begin{equation}\label{eqa:BayesRule}
  p_{\Phi|M}(\phi|m)=\frac{p_{M|\Phi}(m|\phi)p_{\Phi}(\phi)}{p_M(m)}.
\end{equation}
Thus after each measurement, we can use \cref{eqa:BayesRule} to find the new probability distribution for $\phi$, given our measurement result $m$. This then becomes our $p_{\Phi}(\phi)$, and we can iterate the procedure for each successive measurement.

Let $m$ now denote a sequence of measurement results from the same point source. From this sequence, we must construct an estimate $\hat{\phi}(m)$ of the source location. The estimate minimising the average mean squared error is the mean of the posterior distribution: \cite[\S 4.2]{trees_detection_2013}:
\begin{equation}\label{eq:PhiEstimator}
  \hat{\phi}(m)=\sum_{\phi}\phi\,p_{\Phi|M}(\phi|m),
\end{equation}
where the posterior distribution is computed using Bayes' rule \cref{eqa:BayesRule}. 

In \cref{fig:Adaptive}c we consider five measurement strategies, using both constant and adaptive shifts. The performance of each strategy is quantified using the Average Mean Squared Error (AMSE). This is defined as the average of the squared error over the joint distribution for the parameter and measurement results:
\begin{equation}\label{eq:aAMSE}
  \begin{aligned}
    \mathrm{AMSE} &= \sum_{\phi,m}\left(\hat{\phi}(m)-\phi\right)^2p_{\Phi,M}(\phi,m), \\
      &= \sum_{\phi,m}\left(\hat{\phi}(m)-\phi\right)^2p_{\Phi|M}(\phi|m)p_M(m).
  \end{aligned}
\end{equation}
In other words, the AMSE averages the squared error over all possible source locations $\phi$, and all possible sequences of measurement outcomes $m$. 

Unfortunately, we cannot simulate all possible measurement outcomes. The Hermite-Gauss basis is infinite, so must be truncated at some finite $N_{HG}$. If the source is measured $n$ times, the number of possible measurement sequences grows exponentially as $(N_{HG})^n$. This is too large to consider exactly. Instead, we simulate a random subset of the possible measurement sequences. For each simulated sequence $m$ we compute the average error $e(m)$ as:
\begin{equation}
  e(m)=\sum_{\phi}\left(\hat{\phi}(m)-\phi\right)^2p_{\Phi|M}(\phi|m).
\end{equation}
The AMSE can then be approximated by the mean of these:
\begin{equation}
  AMSE\approx\frac{1}{N}\sum_{m}e(m),
\end{equation}
where the sum is over all simulated measurement sequences. This gives us the solid curves in \cref{fig:Adaptive}c. The bands show the variance of the $e(m)$, as an indicator of how consistent the error is.

It is necessary to check that the number of Hermite-Gauss modes in our basis, and the number of simulated sequences, are sufficient to ensure accurate computation of the AMSE. For the plots in \cref{fig:Adaptive}c, we chose a truncation of $N_{HG}=20$, and simulated $480$ sequences. We confirmed that the results did not change if we increased the truncation to $N_{HG}=30$, or lowered the number of sequences to $320$. The code for this is available for download at \cite{lecamwasam_cxi_2023}. 

Finally, we note that in \cref{fig:Adaptive}b, we graphed the coherence for shifts ranging from $-5\le\theta\le5$. The coherence continues to fall as we go beyond this. This is because as the shift grows larger, we can make use of very fine features in high-order Hermite-Gauss modes to gain increasing amounts of information. However, experimental realisation of sampling in these high modes would be very challenging. Moreover accurate simulation of large shifts requires increasingly large Hilbert spaces for the HG modes. It is thus reasonable to restrict ourselves to the chosen range of shifts. We also note that the adaptive strategy was limited to shifts between $-3\le\theta\le 3$, but increasing the bound from three to five did not noticeably change the performance.


%

\clearpage